\pgfplotsset{compat=newest}
\tikzstyle{block} = [draw, rectangle, minimum height=3em, minimum width=6em, align=center, fill=gray!30, text=black, drop shadow]
\tikzstyle{smallblock} = [draw, rectangle, minimum height=2em, minimum width=2em, align=center, fill=gray!30, text=black, drop shadow]
\newcommand\scalemath[2]{\scalebox{#1}{\mbox{\ensuremath{\displaystyle #2}}}}
\tikzstyle{output} = [coordinate]
\newcommand{\interior}[1]{%
  {\kern0pt#1}^{\mathrm{o}}%
}
\pgfplotsset{/pgfplots/circular legend/.style={
        /pgfplots/legend image code/.code={%
            \fill[##1,/tikz/.cd, ] circle[radius=2pt];
            \draw[##1,/tikz/.cd, solid]
            (0cm,0.cm) -- (0cm,0cm);},},              
}
\theoremstyle{definition}
\newtheorem{defn}{Definition}
\newtheorem{prob}{Problem}
\theoremstyle{definition}
\newtheorem{assum}{Assumption}
\newtheorem{lem}{Lemma}
\newtheorem{thm}{Theorem}
\theoremstyle{remark}
\newtheorem{remark}{Remark}
\definecolor{darkgreen}{rgb}{0.0, 0.5, 0.0}
\DeclareMathOperator\cbf{\scalemath{0.7}{\mathrm{cbf}}}
\DeclareMathOperator{\tr}{tr} 
\DeclareMathOperator\sign{sgn}
\DeclareMathOperator\Int{int}
\DeclareMathOperator{\vect}{vec}
\DeclareMathOperator{\proj}{\mathtt{proj}}
\definecolor{color1}{RGB}{0,120,0}      
\definecolor{color2}{rgb}{1.0, 0.22, 0.0} 
\definecolor{color3}{RGB}{0,100,100}    
\definecolor{color4}{RGB}{120,0,120}    
\definecolor{color5}{RGB}{255,100,0}    
\definecolor{color6}{RGB}{0,80,160}     
\definecolor{color7}{RGB}{255,50,50}    
\definecolor{color8}{RGB}{0,160,0}      
\definecolor{color9}{RGB}{200,120,0}    
\definecolor{color10}{RGB}{0,120,200}   
\definecolor{color11}{RGB}{50,0,100}    
\definecolor{color12}{RGB}{200,0,100}   
\definecolor{color13}{rgb}{0.74, 0.72, 0.42} 
\definecolor{color14}{RGB}{0,150,80}    
\definecolor{color15}{RGB}{160,0,0}     
\def\ps@IEEEtitlepagestyle{%
  \def\@oddhead{\hbox{}\scriptsize
  \textit{This work has been submitted to the IEEE for possible publication. 
  Copyright may be transferred without notice, after which this version may no longer be accessible.}
  \hfill}%
  \def\@oddfoot{}%
}
\begin{document}

\title{Safe Output-Feedback Adaptive Optimal Control of Affine Nonlinear Systems}
\author{Tochukwu E. Ogri, Muzaffar Qureshi, Zachary I. Bell, Wanjiku A. Makumi, and Rushikesh Kamalapurkar
\thanks{This research was supported in part by the Air Force Research Laboratory under contract numbers FA8651-24-1-0019 and FA8651-23-1-0006 and the Office of Naval Research under contract number N00014-21-1-2481. Any opinions, findings, or recommendations in this article are those of the author(s), and do not necessarily reflect the views of the sponsoring agencies.}
\thanks{T. E. Ogri, M. Qureshi, and R. Kamalapurkar are with the Department of Mechanical and Aerospace Engineering, University of Florida, Gainesville, FL 32611, USA (e-mail: tochukwu.ogri@ufl.edu; muzaffar.qureshi@ufl.edu; rkamalapurkar@ufl.edu).}
\thanks{Z. I. Bell and Wanjiku A. Makumi are with the Air Force Research Laboratory, Eglin AFB, FL 32542, USA (e-mail: zachary.bell.10@us.af.mil; wanjiku.makumi@us.af.mil).}
}

\maketitle

\begin{abstract} 
In this paper, we develop a safe control synthesis method that integrates state estimation and parameter estimation within an adaptive optimal control (AOC) and control barrier function (CBF)-based control architecture. The developed approach decouples safety objectives from the learning objectives using a CBF-based guarding controller where the CBFs are robustified to account for the lack of full-state measurements. The coupling of this guarding controller with the AOC-based stabilizing control guarantees safety and regulation despite the lack of full state measurement. The paper leverages recent advancements in deep neural network-based adaptive observers to ensure safety in the presence of state estimation errors. Safety and convergence guarantees are provided using a Lyapunov-based analysis, and the effectiveness of the developed controller is demonstrated through simulation under mild excitation conditions.
\end{abstract}

\section{Introduction}\label{section:introduction}
Safety is crucial in the design of autonomous systems operating in uncertain and complex environments. The lack of full-state information makes safe control design challenging since the controller must rely on state estimates to carry out necessary tasks safely despite state estimation errors. The objective of this paper is to develop a safe output-feedback adaptive optimal control architecture to enable simultaneous learning and execution of optimal controllers under safety constraints.

Safe adaptive optimal control (SAOC) of autonomous systems can be realized via adaptive dynamic programming (ADP) (cf. \cite{SCC.Yang.Vamvoudakis.ea2019, SCC.Greene.Deptula.ea2020, SCC.Mahmud.Nivison.ea2021, SCC.Cohen.Belta.ea2020, SCC.Deptula.Chen.ea2020, SCC.Marvi.Kiumarsi.ea2020}), where the value function is learned and used for the synthesis of the controller in real-time. To guarantee safety, results such as
\cite{SCC.Cohen.Belta.ea2020, SCC.Deptula.Chen.ea2020, SCC.Marvi.Kiumarsi.ea2020} incorporate a \emph{barrier-like term} in the cost function of the optimal control problem. However, this approach can lead to a trade-off between safety and performance, with overly restrictive safety conditions hindering the ability of the system to achieve its stability objectives. To avoid this trade-off, \cite{SCC.Cohen.Belta.ea2023} and \cite{SCC.Cohen.Serlin.ea2023} propose a decoupled safety-aware controller by developing a dedicated safety controller which is paired with an ADP controller \cite{SCC.Kamalapurkar.Rosenfeld.ea2016, SCC.Kamalapurkar.Walters.ea2016, SCC.Kamalapurkar.Walters.ea2016a}, allowing the system to maintain safety without compromising stability. However, the results in \cite{SCC.Cohen.Belta.ea2023} rely on an $L_{\infty}$ bound on the gradient of the Lyapunov-like barrier function to guarantee stability, which may be difficult to obtain near the boundary of the safe set. While \cite{SCC.Cohen.Serlin.ea2023} relaxes this bound by using zeroing barrier functions, the controllers in \cite{SCC.Cohen.Belta.ea2023} and \cite{SCC.Cohen.Serlin.ea2023} both require full state feedback. 

Another popular approach to the SAOC problem involves nonlinear coordinate transformations of the system states, initially introduced in \cite{SCC.Graichen.Petit.ea2009} and further explored by \cite{SCC.Yang.Vamvoudakis.ea2019, SCC.Greene.Deptula.ea2020, SCC.Mahmud.Nivison.ea2021} for nonlinear control-affine systems. This technique utilizes barrier functions to transform a state-constrained optimal control problem into an equivalent unconstrained one, but only for problems with box constraints. Recent SAOC techniques like \cite{SCC.Almubarak.Theodorou.ea2021} and \cite{SCC.Bandyopadhyay.Bhasin.ea2025} solve the constrained optimal control problem by utilizing Karush-Kuhn-Tucker (KKT) conditions. This approach relies on the estimation of an optimal state-dependent Lagrange multiplier. The approaches discussed so far require full state information and do not admit straightforward extensions to the partially observed case. Since full state observations are not available in most real systems, the SAOC methods that admit partial observations are critical for real-world applicability.

The partially observed adaptive optimal control problem is considered without safety constraints in results such as \cite{SCC.Yang.Liu.ea2014,SCC.Kamalapurkar2017a,SCC.Self.Harlan.ea2019}, and with safety constraints in results such as \cite{SCC.Mahmud.Abudia.ea2024}. A limitation of the output-feedback control approach in \cite{SCC.Yang.Liu.ea2014} is the computation of the system state using the pseudoinverse of the measurement matrix, which implicitly restricts the technique to systems that have at least as many outputs as states. The applicability of \cite{SCC.Kamalapurkar2017a} and \cite{SCC.Self.Harlan.ea2019} is limited by the need for the system to be in the Brunovsky canonical form. The method in \cite{SCC.Mahmud.Abudia.ea2024} is limited to box constraints and outputs that are selections of state variables.

In this paper, a method for online real-time safe learning that is \emph{robust to modeling errors and relies on partial state information at run-time} is developed for control-affine nonlinear systems with a linear measurement model via integration of state estimation, parameter estimation, and robustified control barrier functions (CBFs) \cite{SCC.Wieland.Allgower.ea2007, SCC.Ames.Grizzle.ea2014,SCC.Agrawal.Panagou.ea2022} in a SAOC framework. An online solution is derived to ensure safety throughout the learning phase. Motivated by the results in \cite{SCC.Jankovic.ea2018, SCC.Cohen.Belta.ea2023, SCC.Cohen.Serlin.ea2023}, we decouple the safety objective from the learning objective to avoid the need for \textit{a priori} unverifiable conditions to guarantee safety. Motivated by the results in \cite{SCC.Agrawal.Panagou.ea2022}, the CBFs are robustified to ensure safety in the presence of state estimation errors. To estimate the states of the system, we leverage recent developments in deep
neural network (DNN)-based adaptive state
estimation \cite{SCC.Joshi.Chowdhary.ea2019, SCC.Joshi.Virdi.ea2020, SCC.Zegers.Sun.ea2023, SCC.Le.Patil.ea2024}, where
the weights of the outermost layers in the DNN are adapted in real-time using an integral concurrent learning (ICL)-based update law while the inner layer DNN weights are adapted intermittently using batch updates.

A na\"{i}ve incorporation of state estimators in ADP is challenging because the resulting estimation errors can propagate through value function updates, leading to instability or suboptimal control policies. The safety constraints further complicate the incorporation of state estimators in ADP since constraint satisfaction must be guaranteed despite estimation errors, which necessitates robust formulations. In this paper, we address the aforementioned challenges by developing an SAOC framework that explicitly accounts for state estimation errors through a high-fidelity DNN-based observer, while ensuring safety using a robust CBF and stability-aware learning updates. 

Robustification of CBFs to ensure safety in the presence of estimation errors typically requires input-to-state stable or bounded-error observers \cite{SCC.Agrawal.Panagou.ea2022} with known error bounds. To develop such observers and to compute the resulting error bounds, one typically assumes that the system state remains in a known compact set \cite{SCC.Agrawal.Panagou.ea2022, SCC.Cosner.Singletary.ea2021}, creating an apparent interdependency in the design of the robust CBF and the observer. In this paper, we resolve this apparent interdependency by developing a control law that, irrespective of the error bounds used in the development of the robust CBF, keeps the system states confined to a compact set that is a superset of the desired safe set. We then compute the observer error bounds and subsequently, the robustified CBFs, using the said compact set to also ensure that the system states remain confined to the safe set. To demonstrate the effectiveness of the proposed approach, we conduct two simulation studies showing that without the developed technique, the controllers fail to guarantee safety. 


\section{Problem Formulation and Preliminaries}	
\label{section:problemFormulation}
\subsection{System Description}
This paper considers continuous time, nonlinear systems of the form 
\begin{subequations}
\label{eq:dynamics}
\begin{align}
\dot{x} &= f(x)+g(x)u,\\ 
y &= Cx, 
\end{align}
\end{subequations}
where $x \in \mathbb{R}^{n}$ is the system state vector, $u \in \mathbb{R}^{m}$ is the control input vector,  $C \in \mathbb{R}^{q \times n}$ is the output matrix, and $y \in \mathbb{R}^{q}$ is the measured output vector. The unknown function $f: \mathbb{R}^{n} \to \mathbb{R}^{n}$ and the known function $g: \mathbb{R}^{n} \to \mathbb{R}^{n \times m}$, denote the drift dynamics and the control effectiveness, respectively. 
\begin{assum}\label{ass:boundedfunctions}
The functions $x \mapsto f(x)$ and $x \mapsto g(x)$ are locally Lipschitz continuous, and satisfy $f(0) = 0$ and $0 < \|g(x)\| \leq \overline{g}$ for some $\overline{g} > 0$ and for all $x \in \mathbb{R}^{n}$.
\end{assum}

To develop the notion of safety, this paper considers a \emph{safe set} $\mathcal{S} \subset \mathbb{R}^{n}$, defined as the zero super level set of a continuously differentiable function $h: \mathbb{R}^{n} \to \mathbb{R}$ such that
\begin{align}
    \mathcal{S} = \{x \in \mathbb{R}^{n} \mid h(x)\geq 0\}\label{eq:safeSet1}, \\
    \partial\mathcal{S} = \{x \in \mathbb{R}^{n} \mid h(x) = 0\}, \\
    \Int(\mathcal{S}) = \{x \in \mathbb{R}^{n} \mid h(x) > 0\}, \label{eq:safeSet3}
\end{align}
 where $\partial\mathcal{S}$ and $\Int(\mathcal{S})$ represent the boundary and interior of $\mathcal{S}$, respectively. Given a locally Lipschitz continuous feedback control policy $\pi: \mathbb{R}^{n} \times \mathbb{R}_{\geq 0}
 \to \mathbb{R}^{m}
$, the closed-loop dynamics can be expressed as
\begin{equation} \label{eq:closedSystem}
\dot{x} = f_{\operatorname{cl}}(x, t) \coloneqq f(x) + g(x)\pi(x, t), \quad x\left(0\right) = x_{0},
\end{equation}
where $f_{\operatorname{cl}} : \mathbb{R}^{n} \times \mathbb{R}_{\geq 0} \to \mathbb{R}^{n}$ is also locally Lipschitz continuous. Let $t \mapsto x(t)$ be a solution to the system in \eqref{eq:closedSystem}, starting from initial condition $x_{0}$ and under the control policy $(x, t) \mapsto \pi(x, t)$. 
\begin{defn}\label{defn:safety}
    The system in \eqref{eq:closedSystem} is safe with respect to the sets $\left(\mathcal{S}_{0}, \mathcal{S}\right)$ if $x_{0} \in \mathcal{S}_{0}$ implies $x(t) \in \mathcal{S}$ for all $t \in \mathbb{R}_{\geq 0}$.
\end{defn}
 The objective is to design an observer to estimate the states of the system online, using input-output measurements, and to simultaneously synthesize and utilize a feedback control policy $(x, t) \mapsto \pi(x, t)$ that minimizes a cost functional (introduced in \eqref{eq:costFunctional}) while maintaining safety with respect to the sets $\left(\mathcal{S}_{0}, \mathcal{S}\right)$.
\begin{figure}
\centering
\begin{tikzpicture}[scale=0.5, transform shape]

    \node [minimum height=1.5cm, minimum width=1.75cm, draw, align=center, fill=blue!5, text=black, rounded corners] (controller) {$u = \pi(\hat{x}, t)$};
    \node [minimum height=1cm, minimum width=3cm, above=0.5cm of controller, anchor=north] (controllerTitle) {};
    \node [draw=blue!80!black, thick, dashed, fit={(controllerTitle) (controller)}, rounded corners] (controllerBox) {};
    \node [below, blue!80!black, align=center] at (controllerBox.north) {\textbf{CBF-RL Policy}};

    \node [minimum height=2cm, minimum width=1cm, draw, align=center, fill=red!5, text=black, rounded corners, right=2cm of controller] (system) {$\dot{x} = f(x) + g(x)u$};
    \node [above=0.5cm of system, anchor=north] (systemTitle) {};
    \node [draw=red!80!black, thick, dashed, fit={(systemTitle) (system)}, rounded corners] (systemBox) {};
    \node [below, red!80!black, align=center] at (systemBox.north) {\textbf{System}};

    \node [coordinate, right=0.5cm of system] (output) {};
    \node [minimum height=2em, minimum width=2em, draw, align=center, fill=gray!10, text=black, rounded corners, right=0.5cm of output] (outputblock) {$y = Cx$};
    \node [above=0.5cm of outputblock, anchor=north] (outputTitle) {};
    \node [draw=gray!70!black, thick, dashed, fit={(outputTitle) (outputblock)}, rounded corners] (outputBox) {};
    \node [below, gray!70!black, align=center] at (outputBox.north) {\textbf{Output}};
    
    \node [coordinate, right=1.5cm of outputblock] (outputy) {};

    \node [minimum height=2cm, minimum width=1cm, draw, align=center, fill=green!10, text=black, rounded corners, below=1.25cm of system, xshift=1cm] (stateEstimator) {$\dot{\hat{x}} = \hat{f}(\hat{x}, \hat{\theta}) + g(\hat{x})u + K(y - C\hat{x})$};
    \node [above=0.5cm of stateEstimator, anchor=north] (stateEstimatorTitle) {};
    \node [draw=green!50!black, thick, dashed, fit={(stateEstimatorTitle) (stateEstimator)}, rounded corners] (stateEstimatorBox) {};
    \node [below, text=green!50!black, align=center] at (stateEstimatorBox.north) {\textbf{State Observer}};

    \node [minimum height=1cm, minimum width=3cm, draw, align=center, fill=orange!10, text=black, rounded corners, below=1.5cm of stateEstimator] (paramEstimator) {$\dot{\hat{\theta}} = f_{\theta}(\hat{\theta}, t)$};
    \node [above=0.5cm of paramEstimator, anchor=north] (paramEstimatorTitle) {};
    \node [draw=orange!70!black, thick, dashed, fit={(paramEstimatorTitle) (paramEstimator)}, rounded corners] (paramEstimatorBox) {};
    \node [below, orange!70!black, align=center] at (paramEstimatorBox.north) {\textbf{Parameter Estimator}};

    \node [
    minimum height=2cm, 
    minimum width=2cm, 
    draw, 
    align=center, 
    fill=purple!10, 
    text=black, 
    rounded corners,
    below=2.75cm of controller, 
    xshift=-0.3cm, 
    rectangle split, 
    rectangle split horizontal, 
    rectangle split parts=2
    ] (actor) {
        $\hat{u}(\hat{x}, \hat{W}_{a})$ 
        \nodepart{two} 
        $\hat{V}(\hat{x}, \hat{W}_{c})$
    };
    \node [above=0.5cm of actor, anchor=north] (actorTitle) {};
    \node [draw=purple!70!black, thick, dashed, fit={(actorTitle) (actor)}, rounded corners] (actorBox) {};
    \node [below, purple!70!black, align=center] at (actorBox.north) {Actor + Critic};

    \draw [->, thick] (controller) -- node[name=u, pos=0.5, above] {$u$} (system);
    \draw [->, thick] (system) -- node[name=systemoutput, above] {$x$} (outputblock);
    \draw [->, thick] (outputblock) -- node[name=y, above] {$y$} (outputy);
    \draw [->, thick] ([xshift=-1.0cm]outputy) |-  ([yshift=0.5cm]stateEstimator);
    \draw [->, thick] ([xshift=-1.875cm]outputy) ++(0.75,0) |- ++(0,2.5) -| ([yshift=0.25cm, xshift=-1cm]controller.west) |- ([yshift=0.25cm]controller.west);
    \draw [->, thick] ([yshift=0.8cm]stateEstimator.west) -- ++(-0.5,0) -| node[pos=0.25, above] {$\hat{x}$} ([xshift=0.6cm]controller.south);
    \draw [->, thick] (paramEstimator.west) -- ++(0, 0) -| node[pos=0.5, left] {$\hat{\theta}$} ([xshift=-1cm, yshift=-0.25cm]stateEstimator.west) |- ([yshift=-0.25cm]stateEstimator.west);
    \draw [->, thick] ([yshift=-0.2cm]stateEstimator.east) -- ++(1.5,0) |- node[pos=0.75, above] {$\hat{x}$} (paramEstimator.east);
    \draw [->, thick] (actor.west) -- ++(-0.5, 0) |- node[pos=0.25, left] {$\pi_{\operatorname{des}}(\hat{x}, t)$} ([yshift=-0.25cm]controller.west);
    \draw [->, thick] (stateEstimator.south) -- ++(0,0) |- node[pos=0.9, above] {$\hat{x}$} ([yshift=-0.25cm]actor.east);
    \draw [->, thick] ([xshift=1cm] controller.east) -- ++(0,-1.35) -| ([xshift=-1.25cm]stateEstimator.north);

\end{tikzpicture}
\caption{Schematic of the proposed control system showing the integration of parameter estimation, state estimation, RL, and CBFs to enable safe adaptive optimal control.}
\end{figure}
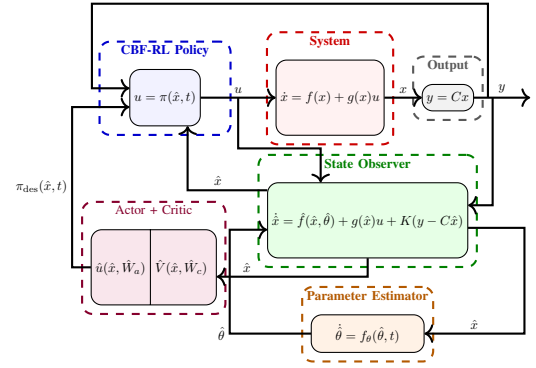
 \subsection{Control Barrier Functions}

The safety objective motivates the adoption of control barrier functions to formulate conditions to identify the set of control inputs that guarantee the system in \eqref{eq:closedSystem} is safe with respect to the sets $\left(\mathcal{S}_{0}, \mathcal{S}\right)$.
\begin{defn}\label{defn:cbf}
 \cite{SCC.Ames.Xu.ea2017} Given a set $\mathcal{S}\subset \mathbb{R}^{n}$, as defined in \eqref{eq:safeSet1}--\eqref{eq:safeSet3}, a continuously differentiable function $h : \mathbb{R}^{n} \to \mathbb{R}$ is a control barrier function (CBF) for  \eqref{eq:dynamics}, if there exist a class $\mathcal{K}$ function $\alpha: \mathbb{R}_{\geq 0} \to \mathbb{R}_{\geq 0}$ such that
\begin{equation}
    \sup_{u \in \mathbb{R}^{m}}\left\{\nabla_{x}h(x)\left(f(x) + g(x)u\right)\right\} \geq -\alpha\left(h(x)\right),
\end{equation}
for all $x \in \mathbb{R}^{n}$, where $\nabla_{(\cdot)} \coloneqq  \frac{\partial}{\partial (\cdot)}$ denotes the partial derivative operator.
\end{defn}
  The following result on safe control design motivates the constraint utilized in Problem~\ref{prob:coop} to guarantee safety.
\begin{thm}\label{thm:forwardInvariantC}
    \cite[Corollary~2]{SCC.Ames.Xu.ea2017} Given a set $\mathcal{S} \subset \mathbb{R}^{n}$ and a CBF $h : \mathbb{R}^{n} \to \mathbb{R}$ for the system in \eqref{eq:dynamics}, if $\nabla_{x}h(x)g(x) \neq 0$ for all $x \in \partial \mathcal{S}$, then any locally Lipschitz continuous control policy $\pi:\mathbb{R}^{n} \times \mathbb{R}_{\geq 0} \to \mathbb{R}^{m}$ such that $\pi(x, t) \in \mathcal{K}_{\cbf}(x)$ for all $x \in \mathbb{R}^{n} \text{ and } t \in \mathbb{R}_{\geq 0}$, where
\begin{equation}\label{eq:cbfControlSet}
    \mathcal{K}_{\cbf}(x) \coloneqq \{u \in \mathbb{R}^{m}\!: \!\nabla_{x}h(x)\left(f(x) \!+\! g(x)u\right) \geq - \alpha\left(h(x)\right) \},
\end{equation}
guarantees the system in \eqref{eq:closedSystem} is safe with respect to the sets $(\mathcal{S}, \mathcal{S})$.
\end{thm}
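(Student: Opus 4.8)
The plan is to recognize that safety with respect to $(\mathcal{S}, \mathcal{S})$ in the sense of Definition~\ref{defn:safety} is precisely forward invariance of $\mathcal{S}$ under the closed-loop flow, and to establish this invariance through a comparison argument applied to the scalar signal $t \mapsto h(x(t))$. First I would fix an arbitrary $x_0 \in \mathcal{S}$ and note that Assumption~\ref{ass:boundedfunctions} together with the local Lipschitz continuity of $\pi$ makes $f_{\operatorname{cl}}$ in \eqref{eq:closedSystem} locally Lipschitz, so a unique maximal solution $t \mapsto x(t)$ exists on some interval $[0, T_{\max})$. Differentiating $\eta(t) \coloneqq h(x(t))$ along this solution and using $\pi(x,t) \in \mathcal{K}_{\cbf}(x)$ from \eqref{eq:cbfControlSet} immediately yields the differential inequality $\dot{\eta}(t) = \nabla_{x}h(x(t))\, f_{\operatorname{cl}}(x(t), t) \geq -\alpha(\eta(t))$ for all $t \in [0, T_{\max})$.

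Next I would compare $\eta$ against the scalar initial value problem $\dot{y} = -\alpha(y)$ with $y(0) = h(x_0) \geq 0$. Because $\alpha$ is class $\mathcal{K}$, it satisfies $\alpha(0) = 0$ and $\alpha(s) \geq 0$ for $s \geq 0$, so $y \equiv 0$ is an equilibrium and the nonnegative ray is forward invariant for the comparison system, giving $y(t) \geq 0$ for all $t \geq 0$. The comparison principle then delivers $\eta(t) \geq y(t) \geq 0$, i.e.\ $h(x(t)) \geq 0$, so $x(t) \in \mathcal{S}$ for every $t \in [0, T_{\max})$. To upgrade this to the global statement required by Definition~\ref{defn:safety}, I would rule out finite escape: since the trajectory is confined to $\mathcal{S}$, boundedness of $x(t)$ under the standing assumptions forces $T_{\max} = \infty$, whence $x(t) \in \mathcal{S}$ for all $t \in \mathbb{R}_{\geq 0}$.

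The \emph{main obstacle} is that the textbook comparison lemma requires the comparison vector field to be locally Lipschitz, whereas $\alpha$ is only class $\mathcal{K}$ and may fail to be Lipschitz near the origin, so $\dot{y} = -\alpha(y)$ need not admit unique solutions. I would resolve this either by invoking the differential-inequality (Dini-derivative) version of the comparison principle, which does not require uniqueness of the majorant, or equivalently by a direct Nagumo-type sub-tangentiality argument: on $\partial\mathcal{S}$, where $h = 0$, the membership $\pi(x,t) \in \mathcal{K}_{\cbf}(x)$ forces $\nabla_{x}h(x)\, f_{\operatorname{cl}}(x, t) \geq -\alpha(0) = 0$, so the closed-loop vector field never points strictly out of $\mathcal{S}$ at the boundary. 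The regularity hypothesis $\nabla_{x}h(x)g(x) \neq 0$ for all $x \in \partial\mathcal{S}$ is what guarantees that $\partial\mathcal{S}$ is a regular level set, so that this boundary tangent-cone characterization is valid and, in particular, $\mathcal{K}_{\cbf}(x)$ is nonempty there, ensuring the hypothesized locally Lipschitz selection $\pi$ can indeed exist.
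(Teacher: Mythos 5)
The paper does not actually prove this statement; it imports it verbatim from the cited reference (Corollary~2 of Ames--Xu et al.), so there is no in-paper proof to compare against. Your argument --- differentiate $\eta(t) \coloneqq h(x(t))$, obtain $\dot{\eta} \geq -\alpha(\eta)$ from membership in $\mathcal{K}_{\cbf}$, and conclude $\eta(t) \geq 0$ by comparison, using the Dini-derivative comparison principle or a Nagumo sub-tangentiality argument to sidestep the fact that a class $\mathcal{K}$ function need not be locally Lipschitz at the origin --- is precisely the standard proof of the cited result, and the obstacle you flag (non-uniqueness of solutions of the majorant $\dot{y} = -\alpha(y)$) together with your two proposed repairs is handled correctly. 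The observation that $\nabla_{x}h(x)g(x) \neq 0$ on $\partial\mathcal{S}$ guarantees both a regular level set and nonemptiness of $\mathcal{K}_{\cbf}$ there is also right.

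The one step that does not survive scrutiny is the final upgrade from $[0, T_{\max})$ to $\mathbb{R}_{\geq 0}$. You assert that confinement to $\mathcal{S}$ plus ``the standing assumptions'' bounds $x(t)$ and hence rules out finite escape, but $\mathcal{S}$ is merely a superlevel set of $h$ and need not be compact --- in both of the paper's simulation studies ($h(x) = -x_{2}^{2} - x_{1} + 1$ and $h(x) = \|x - z\| - r$) it is unbounded --- and Assumption~\ref{ass:boundedfunctions} bounds only $g$, not $f$, so the closed-loop vector field can still exhibit finite escape time inside $\mathcal{S}$. The invariance conclusion you establish on the maximal interval of existence is the substantive content of the theorem, but Definition~\ref{defn:safety} demands $x(t) \in \mathcal{S}$ for all $t \geq 0$; to close the argument you must either assume forward completeness (or compactness of $\mathcal{S}$) explicitly, or restrict the conclusion to the maximal interval of existence, as the cited reference effectively does.
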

\begin{prob}{Constrained optimal control problem}\label{prob:coop}
\begin{equation}\label{eq:opimalProb}
\begin{aligned}
\min_{u: \mathbb{R}_{\geq 0} \to \mathbb{R}^{m}} \quad & J\left(x_{0}, u(\cdot)\right),\\
\textrm{s.t.} \quad & u(t) \in \mathcal{K}_{\cbf}(x(t)), \forall t \in \mathbb{R}_{\geq 0},
\end{aligned}
\end{equation}
 where the cost functional $J$ is defined as
 \begin{equation} \label{eq:costFunctional}
J\left(x_{0}, u(\cdot)\right)  \coloneqq \int_{0}^\infty r\left(x \left(\tau\right), u \left(\tau\right)\right)  \mathrm{d}\tau,
\end{equation}
$x(\cdot)$ denotes the trajectory of \eqref{eq:dynamics} starting from $x_{0}$ and under the control $u(\cdot)$, $r: \mathbb{R}^{n} \times \mathbb{R}^{m} \to \mathbb{R}$ is the instantaneous cost defined as
\begin{equation}\label{eq:cost} r\left(x ,u \right) \coloneqq Q(x) + u^\top R u,
\end{equation}
 $Q: \mathbb{R}^{n} \to \mathbb{R}$ is a continuous positive definite
function, and $R \in \mathbb{R}^{m \times m}$ is a symmetric positive definite control penalty matrix. The function $Q$ is selected such that there exists a class $\mathcal{K}_\infty$ function $\underline{q}: \mathbb{R}_{\geq 0} \to \mathbb{R}_{\geq 0}$ such that
\begin{equation}\label{eq:qBound}
    \underline{q}\left(\|x\|\right) \leq Q(x), \forall x \in\mathbb{R}^n.
\end{equation} 
\end{prob}
 
Even with perfect state information, solving the constrained optimal control problem outlined in Problem~\ref{prob:coop} for the nonlinear system in \eqref{eq:dynamics} is challenging (cf. \cite{SCC.Almubarak.Theodorou.ea2021, SCC.Bandyopadhyay.Bhasin.ea2025}). While \cite{SCC.Almubarak.Theodorou.ea2021} successfully tackles the CBF-constrained optimal control problem by minimizing the constrained generalized Hamilton-Jacobi Bellman (GHJB) equation, the modified Galerkin successive approximation approach in \cite{SCC.Almubarak.Theodorou.ea2021} becomes difficult to implement if only partial state measurements are available. In this paper, we decouple the safety objective from the stability objective, which, while compromising optimality, guarantees safety regardless of learning performance \cite{SCC.Cohen.Belta.ea2023}.

\section{Robust safety}\label{section:safeQPControl}

 To meet the hypothesis of Theorem~\ref{thm:forwardInvariantC}, a control policy $\pi$ is needed that, for all $x\in \mathbb{R}^{n}$ and $t \in \mathbb{R}_{\geq 0}$, satisfies $\pi(x, t) \in \mathcal{K}_{\cbf}(x)$. Since $x$ is not available for feedback, one can only ensure that $\pi(\hat{x}, t) \in \mathcal{K}_{\cbf}(\hat{x})$ for all $\hat{x}$ and $t$.
 The following theorem, adapted from \cite{SCC.Agrawal.Panagou.ea2022}, provides safety guarantees in the presence of state estimation errors. 
\begin{thm}\label{thm:robustSafety}
  \cite{SCC.Agrawal.Panagou.ea2022} Let $\varepsilon \in (0, \varepsilon^*)$ be a user-defined constant, where $\varepsilon^{*} \coloneqq \sup \left\{ \varepsilon > 0 : \exists \hat{x} \in \mathcal{S},\ \bar{B}(\hat{x}, \varepsilon) \subseteq \mathcal{S} \right\}$\footnote{$\bar{B}(x, y)$ denotes a closed ball of radius $y$ centered at $x$.}.
    If the set $\mathcal{S}$ is defined by the CBF $h: \mathbb{R}^{n} \to \mathbb{R}$, Assumptions~\ref{ass:boundedfunctions}--\ref{ass:boundedGradientG} are satisfied, the state estimates are initialized such that $\hat{x}_{0} \in \{\hat{x}: \bar{B}(\hat{x}, \varepsilon) \subset \mathcal{S}\}$, the control policy $\pi: \mathcal{S} \times \mathbb{R}_{\geq 0} \to \mathbb{R}^{m}$ is locally Lipschitz continuous in the first argument, piecewise-continuous in the second argument, and satisfies  
  \begin{equation}\label{eq:controlCondition}
      \pi(\hat{x}, t) \in \bigcap_{x \in \bar{B}(\hat{x}, \varepsilon)}{\mathcal{K}_{\cbf}}(x), \, \forall x, \hat{x} \in \mathcal{S}, \text{and } t \in \mathbb{R}_{\geq 0},
  \end{equation}
and if the observer used to generate $\hat{x}$ ensures that for all $x_{0} \in (\bar{B}(\hat{x}_0,\varepsilon), \mathcal{S})$, the resulting trajectory $x$ satisfies $x(t) \in \bar{B}(\hat{x}(t), \varepsilon)$ for all $t$, then the controller $u(t) = \pi(\hat{x}(t), t)$ renders the closed loop system in \eqref{eq:closedSystem} safe with respect to the sets $\left(\bar{B}(\hat{x}_{0}, \varepsilon),  \mathcal{S}\right)$.
\end{thm}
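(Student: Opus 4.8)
The plan is to reduce the output-feedback safety claim to the full-state forward-invariance mechanism underlying Theorem~\ref{thm:forwardInvariantC}. The central observation is that, although feedback is computed from the estimate $\hat{x}$, the \emph{applied} input $u(t) = \pi(\hat{x}(t), t)$ still belongs to $\mathcal{K}_{\cbf}$ evaluated at the \emph{true} state $x(t)$; once this pointwise membership is secured along the trajectory, a standard comparison argument delivers $h(x(t)) \geq 0$ for all $t$, which is exactly $x(t) \in \mathcal{S}$.

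First I would dispatch the initial condition. Because safety is claimed for the pair $(\bar{B}(\hat{x}_0, \varepsilon), \mathcal{S})$ and $\hat{x}_0$ is initialized so that $\bar{B}(\hat{x}_0, \varepsilon) \subset \mathcal{S}$, every admissible $x_0 \in \bar{B}(\hat{x}_0, \varepsilon)$ satisfies $x_0 \in \mathcal{S}$, hence $h(x_0) \geq 0$. Next, fixing a time $t$ in the maximal interval of existence of the closed-loop solution, I would chain the two structural hypotheses: the observer guarantee yields $x(t) \in \bar{B}(\hat{x}(t), \varepsilon)$, while the robust control condition~\eqref{eq:controlCondition} forces $\pi(\hat{x}(t), t) \in \mathcal{K}_{\cbf}(x)$ for \emph{every} $x \in \bar{B}(\hat{x}(t), \varepsilon)$. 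Specializing the latter to the particular point $x = x(t)$ gives $u(t) \in \mathcal{K}_{\cbf}(x(t))$. Expanding the definition~\eqref{eq:cbfControlSet} and applying the chain rule $\dot{h}(x(t)) = \nabla_x h(x(t)) (f(x(t)) + g(x(t)) u(t))$ then produces the differential inequality $\dot{h}(x(t)) \geq -\alpha(h(x(t)))$ pointwise along the trajectory. Finally, I would close the argument with the comparison lemma: the comparison system $\dot{y} = -\alpha(y)$, $y(0) = h(x_0) \geq 0$, has the nonnegative equilibrium $y \equiv 0$ since $\alpha \in \mathcal{K}$ gives $\alpha(0) = 0$, so $y(t) \geq 0$ and therefore $h(x(t)) \geq y(t) \geq 0$ throughout the interval.

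The hard part will be the well-posedness and admissibility bookkeeping rather than the CBF inequality itself. I must first justify existence of a (Carath\'eodory) closed-loop solution using the regularity hypotheses on $\pi$ together with Assumption~\ref{ass:boundedfunctions}, and then promote the conclusion from the maximal interval to all of $\mathbb{R}_{\geq 0}$ by ruling out finite-time escape; the latter follows because confinement of $x(t)$ to $\mathcal{S}$ bounds the closed-loop vector field through $\bar{g}$ and the continuity of $f$, $g$, and $\pi$. The more delicate issue is keeping the estimate $\hat{x}(t)$ inside the domain on which the control and the robust condition~\eqref{eq:controlCondition} are valid (namely $\mathcal{S}$): from $x(t) \in \mathcal{S} \cap \bar{B}(\hat{x}(t), \varepsilon)$ one only obtains $\hat{x}(t)$ within $\varepsilon$ of a safe state, so I would need a continuity/continuation argument (or the observer co-design) to ensure $\hat{x}(t)$ never exits the admissible set where~\eqref{eq:controlCondition} can be invoked. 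The observer containment property $x(t) \in \bar{B}(\hat{x}(t), \varepsilon)$ is taken as a hypothesis here; discharging it, and resolving its interdependency with the compact-set confinement flagged in the introduction, is deferred to the observer construction.
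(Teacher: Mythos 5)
The paper does not actually prove Theorem~\ref{thm:robustSafety}: it is imported by citation from \cite{SCC.Agrawal.Panagou.ea2022} (only the subsequent Theorem~\ref{thm:safeControl} gets a proof, and that proof concerns the QP controller, not this invariance claim). Your argument is the standard one and is essentially the proof given in the cited reference: specialize the robust membership condition~\eqref{eq:controlCondition} to the true state using the observer's containment guarantee, obtain $u(t)\in\mathcal{K}_{\cbf}(x(t))$, and conclude $h(x(t))\geq 0$ by comparison with $\dot{y}=-\alpha(y)$, exactly the mechanism of Theorem~\ref{thm:forwardInvariantC}. Two of the wrinkles you flag are genuine and worth keeping: (i) the comparison step tacitly needs $\alpha$ to be locally Lipschitz (or an appeal to Nagumo-type invariance at the boundary $h=0$), since a bare class $\mathcal{K}$ function does not guarantee uniqueness of the comparison solution; and (ii) condition~\eqref{eq:controlCondition} is only hypothesized for $\hat{x}\in\mathcal{S}$, while $x(t)\in\mathcal{S}\cap\bar{B}(\hat{x}(t),\varepsilon)$ only places $\hat{x}(t)$ within $\varepsilon$ of $\mathcal{S}$ --- this is a gap in the theorem as restated in this paper rather than in your proof, and is resolved in the source by co-designing the observer (here deferred to Section~\ref{section:stateEstimatorDesign} and the coupled analysis of Section~\ref{section:stabilityAnalysis}). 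Your deferral of both points is appropriate at the level of a proof sketch.
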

\begin{remark}
    A \emph{bounded-error} state observer that ensures that for all $x_{0} \in (\bar{B}(\hat{x}_0,\varepsilon), \mathcal{S})$, the resulting trajectory $x$ satisfies $x(t) \in \bar{B}(\hat{x}(t), \varepsilon)$ for all $t \in \mathbb{R}_{\geq 0}$ under the controller $\pi$, as required by Theorem~\ref{thm:robustSafety} and the subsequent Theorem \ref{thm:safeControl}, is designed in Section~\ref {section:stateEstimatorDesign}. A coupled analysis of the observer and the controller is needed to select an appropriate bound $\varepsilon$, see Section~\ref{section:stabilityAnalysis}.
\end{remark}
The following assumptions facilitate the development of a controller satisfying \eqref{eq:controlCondition}. Assumption~\ref{ass:boundedGradientF} and Assumption~\ref{ass:boundedGradientG} are similar to assumptions made in \cite{SCC.Agrawal.Panagou.ea2022}.
\begin{assum}\label{ass:boundedGradientF}
There exists a known Lipschitz continuous function $F: \mathbb{R}^{n} \rightarrow \mathbb{R}$ such that, for all $\hat{x} \in \mathcal{S}$,
\begin{equation}\label{eq:boundedGradientF}
   F(\hat{x}) \leq \inf_{x \in \bar{B}(\hat{x}, \varepsilon)} \nabla_{x} h(x) f(x) + \alpha(h(x)).
\end{equation}
\end{assum}
Assumption~\ref{ass:boundedGradientF} is satisfied, for example, if the functions $x \mapsto \nabla_{x} h(x)f(x)$ and $x \mapsto (\alpha \circ h)(x)$ are locally Lipschitz continuous on $\mathcal{S}$ and their Lipschitz constants are known. The notation $\circ$ represents the function composition operator.
\begin{assum}\label{ass:boundedGradientG}
    There exist known locally Lipschitz continuous functions $ G_{i}^{-}, G_{i}^{+}: \mathbb{R}^{n} \to \mathbb{R}$ for $i = {1, \hdots, m}$, such that 
    \begin{equation}
        G_{i}^{-}(\hat{x}) \leq \left[\nabla_{x}h(x)g(x)\right]_{i} \leq G_{i}^{+}(\hat{x}),
    \end{equation}
    with $\nabla_{x}h(x)g(x) \neq 0$ and for all  $x \in \mathcal{S}$, $\hat{x} \in \{\hat{x}: x \in \bar{B}(\hat{x}, \varepsilon)$\}. Additionally, for all $\hat{x} \in \mathcal{S}$, $\sign( G_{i}^{-}(\hat{x})) = \sign( G_{i}^{+}(\hat{x}))$.
\end{assum}
Assuming $\sign( G_{i}^{-}(\hat{x})) = \sign( G_{i}^{+}(\hat{x}))$, is needed to know whether a positive or negative $u_{i}$ increases $\dot{h}(x, u)$. This assumption facilitates the design of the approximate feedback control policy presented in the following theorem. The theorem is adapted from \cite{SCC.Agrawal.Panagou.ea2022} and a proof is included for completeness.
\begin{thm}\label{thm:safeControl} \cite{SCC.Agrawal.Panagou.ea2022}
    Consider a set $\mathcal{S}$ with a CBF $h: \mathbb{R}^{n} \to \mathbb{R}$ for the system in \eqref{eq:dynamics}, a state observer that ensures that for all $x_{0} \in (\bar{B}(\hat{x}_0,\varepsilon), \mathcal{S})$, the resulting trajectory $x$ satisfies $x(t) \in \bar{B}(\hat{x}(t), \varepsilon)$ for all $t \in \mathbb{R}_{\geq 0}$, and a desired controller $\pi_{\operatorname{des}}: \mathbb{R}^{n} \times \mathbb{R}_{\geq 0} \rightarrow \mathbb{R}^{m}$, which is locally Lipschitz continuous with respect to $\hat{x}$ and piecewise continuous with respect to $t$. Provided Assumptions~\ref{ass:boundedfunctions}--\ref{ass:boundedGradientG} are met, and if the conditions of Theorem~\ref{thm:robustSafety} are satisfied, then the approximate feedback controller $\pi: \mathbb{R}^{n} \times \mathbb{R}_{\geq 0} \to \mathbb{R}^{m}$, determined by solving the quadratic program (QP)
\begin{equation}\label{eq:qp}
\begin{aligned}
\pi(\hat{x}, t) \coloneqq \ &\mathrm{\arg}\min_{u \in \mathbb{R}^{m}} \  \frac{1}{2}\left\|u - \pi_{\operatorname{des}}(\hat{x}, t)\right\|^{2}, \\
\textrm{s.t.} &\ F(\hat{x}) + \sum_{i =  1}^{m}\min\left\{G_{i}^{-}(\hat{x})u_{i}, G_{i}^{+}(\hat{x})u_{i}\right\} \geq 0
\end{aligned} 
\end{equation}
 is locally Lipschitz continuous with respect to $\hat{x}$ and piecewise continuous with respect to $t$.
Furthermore, the closed loop system in \eqref{eq:closedSystem} is safe with respect to the sets $(\bar{B}(\hat{x}_{0}, \varepsilon), \mathcal{S})$.
\end{thm}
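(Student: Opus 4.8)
The plan is to prove the two assertions separately, starting with safety, which is the cleaner part and is what ultimately invokes Theorem~\ref{thm:robustSafety}. First I would record that the QP \eqref{eq:qp} is always feasible: since Assumption~\ref{ass:boundedGradientG} gives $\nabla_x h(x)g(x)\neq 0$, some index $i$ has $[\nabla_x h(x)g(x)]_i\neq 0$, whence the sign condition forces both $G_i^{\pm}(\hat{x})$ to be nonzero with common sign, and choosing $u_i$ of that sign with large magnitude drives $\min\{G_i^-(\hat{x})u_i,G_i^+(\hat{x})u_i\}$ arbitrarily large, so the constraint set is nonempty. For safety, the key elementary inequality is that, for each $i$ and each $x\in\bar{B}(\hat{x},\varepsilon)$,
\begin{equation*}
\min\{G_i^-(\hat{x})u_i,\,G_i^+(\hat{x})u_i\}\le \left[\nabla_x h(x)g(x)\right]_i u_i,
\end{equation*}
which follows from the sandwich $G_i^-(\hat{x})\le[\nabla_x h(x)g(x)]_i\le G_i^+(\hat{x})$ by a case split on the sign of $u_i$ (the minimum selecting $G_i^-(\hat{x})u_i$ when $u_i\ge 0$ and $G_i^+(\hat{x})u_i$ when $u_i<0$). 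Summing over $i$ and adding $F(\hat{x})\le\nabla_x h(x)f(x)+\alpha(h(x))$ from Assumption~\ref{ass:boundedGradientF}, the active QP constraint implies $\nabla_x h(x)(f(x)+g(x)u)\ge-\alpha(h(x))$ for every $x\in\bar{B}(\hat{x},\varepsilon)$, i.e. $u\in\mathcal{K}_{\cbf}(x)$. Hence $\pi(\hat{x},t)$ satisfies \eqref{eq:controlCondition}, and since the observer hypothesis of Theorem~\ref{thm:robustSafety} is assumed, that theorem yields safety with respect to $(\bar{B}(\hat{x}_0,\varepsilon),\mathcal{S})$.

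For regularity I would first observe that $u\mapsto\sum_i\min\{G_i^-(\hat{x})u_i,G_i^+(\hat{x})u_i\}$ is concave (a sum of minima of affine functions), so the feasible set $\mathcal{C}(\hat{x})$ is closed and convex, and the $1$-strongly convex objective makes $\pi(\hat{x},t)=\proj_{\mathcal{C}(\hat{x})}\!\big(\pi_{\operatorname{des}}(\hat{x},t)\big)$ the unique minimizer. Piecewise continuity in $t$ is then inherited directly, because $\mathcal{C}(\hat{x})$ does not depend on $t$ and projection onto a fixed convex set is nonexpansive, so $\pi(\hat{x},\cdot)$ is piecewise continuous whenever $\pi_{\operatorname{des}}(\hat{x},\cdot)$ is. The Lipschitz dependence on $\hat{x}$ is the substantive part. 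I would split into the inactive case, where $\pi_{\operatorname{des}}(\hat{x},t)$ already satisfies the constraint and $\pi=\pi_{\operatorname{des}}$ is Lipschitz by hypothesis, and the active case, where the solution lies on $\{u:F(\hat{x})+\sum_i\min\{\cdots\}=0\}$. On each orthant of $u$-space the $\min$ selects a fixed affine branch, so the constraint reduces to a single linear inequality $F(\hat{x})+\beta(\hat{x})^\top u\ge 0$ with $\beta(\hat{x})$ having entries among the $G_i^{\pm}(\hat{x})$; the feasibility argument shows $\beta(\hat{x})\neq 0$, and on compacta $\|\beta(\hat{x})\|$ is bounded away from zero. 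Within an orthant the problem is a standard strongly convex QP with a single active linear constraint whose data $(\pi_{\operatorname{des}},F,\beta)$ are locally Lipschitz in $\hat{x}$ and whose constraint gradient is bounded away from zero, so its minimizer is locally Lipschitz in $\hat{x}$.

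The main obstacle is the gluing across the kinks introduced by the $\min$ operators: the boundary of $\mathcal{C}(\hat{x})$ is only piecewise linear, the constraint gradient is nonsmooth wherever some $u_i=0$, and the generic Lipschitz-stability theorems for smooth parametric QPs do not apply verbatim. I would resolve this either by a semismooth implicit-function argument on the KKT system $u-\pi_{\operatorname{des}}=\lambda\,\xi$, $\lambda\ge 0$, $\lambda\,c(u,\hat{x})=0$, where $\xi$ is a subgradient of the concave constraint $c$ and $(u,\lambda)$ are uniquely pinned down by strong convexity, or, more constructively, by verifying that the orthant-wise Lipschitz solution maps agree on shared faces (since the selected affine branch is continuous across $u_i=0$), so that the patched map is globally single valued, continuous, and locally Lipschitz with a common constant. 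The uniform lower bound on $\|\beta(\hat{x})\|$ furnished by $\nabla_x h(x)g(x)\neq 0$ is precisely what prevents the Lipschitz constant from blowing up as the active affine branch switches near a kink.
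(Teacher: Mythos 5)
Your proof is correct, and the safety half follows the paper exactly: the same componentwise inequality $\min\{G_i^-(\hat{x})u_i,G_i^+(\hat{x})u_i\}\le[\nabla_xh(x)g(x)]_iu_i$ (the paper states it without the sign case-split you supply), summed and combined with Assumption~\ref{ass:boundedGradientF} to land in $\bigcap_{x\in\bar{B}(\hat{x},\varepsilon)}\mathcal{K}_{\cbf}(x)$ and invoke Theorem~\ref{thm:robustSafety}. Where you genuinely diverge is on feasibility and regularity. For feasibility the paper reformulates the constraint with auxiliary variables $z_i\le G_i^{\pm}(\hat{x})u_i$, stacks everything into a $(2m+1)\times 2m$ linear system, and argues that at most one of each $z_i$-pair can be active so the active constraint gradients are linearly independent; your argument --- pick the nonzero component guaranteed by $\nabla_xh(x)g(x)\neq 0$ and drive $u_i$ to infinity with the common sign of $G_i^{\pm}$ --- is more elementary and avoids the rank bookkeeping. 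For Lipschitz continuity the paper's auxiliary-variable lift is doing real work: it converts the nonsmooth concave constraint into finitely many \emph{linear} constraints in $(u,z)$, so the solution map falls directly under the cited Lipschitz-stability theorem for parametric QPs with linearly independent active gradients, and the kink-gluing problem you wrestle with never arises. Your route --- viewing $\pi(\hat{x},t)$ as $\proj_{\mathcal{C}(\hat{x})}(\pi_{\operatorname{des}}(\hat{x},t))$, getting orthant-wise Lipschitz dependence from the single active linear branch with $\|\beta(\hat{x})\|$ bounded away from zero, and patching the continuous single-valued selection across shared faces --- is sound and more self-contained, but it is the longer path; the one point you should make fully explicit if you keep it is that a continuous map that is Lipschitz with a common constant on finitely many closed polyhedral-type pieces covering a convex neighborhood is globally Lipschitz there, which is the lemma your gluing step silently relies on. Either approach closes the theorem; the paper buys brevity with the epigraph reformulation and an external citation, while yours buys transparency about why the nonsmoothness of the $\min$ is harmless.
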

\begin{proof}To prove the existence and uniqueness of solutions of the QP, rewrite $\min\{G_{i}^{-}(\hat{x})u_{i}, G_{i}^{+}(\hat{x})u_{i}\}$ in \eqref{eq:qp} using auxiliary variables $z_{i}$ such that
    $z_{i} \leq G_{i}^{-}(\hat{x}) u_{i}$,  $z_{i} \leq G_{i}^{+}(\hat{x}) u_{i}$ for all $i =\{1, \hdots, m\}$. Using the auxiliary variable $z$, the QP in \eqref{eq:qp} can be expressed in standard form as
    \begin{equation*}
    \begin{aligned}
        & \min_{u, z \in \mathbb{R}^{m}} \ \frac{1}{2} u^\top u - \pi_{\operatorname{des}}(\hat{x}, t)^{\top} u,\\
     &\textrm{s.t.} \ 
        \begin{cases}
            F(\hat{x}) + \sum_{i=1}^{m} z_{i} &\geq 0, \\
            z_{i} - G_{i}^{-}(\hat{x}) u_{i} &\leq 0, \\
            z_{i} - G_{i}^{+}(\hat{x}) u_{i} &\leq 0.
        \end{cases}
         \end{aligned}
    \end{equation*}
Stacking the constraints in the QP above and expressing them in matrix form yields the following 
\begin{equation}
    \scalebox{0.9}{$\left[\begin{array}{ccc|ccc}
G_{1}^{-} & \cdots & 0 & -1 & \cdots & 0 \\
G_{1}^{+} & \cdots & 0 & -1 & \cdots & 0 \\
\vdots & \vdots & \ddots & \vdots & \vdots \\
0 & \cdots & G_m^{-} & 0 & \cdots & -1 \\
0 & \cdots & G_m^{+} & 0 & \cdots & -1 \\
\hline
0 & \cdots & 0 & 1 & \cdots & 1
\end{array}\right]
\left[\begin{array}{c}
u_{1} \\ \vdots \\ u_m \\
\hline
z_{1} \\ \vdots \\ z_m
\end{array}\right]
\geq
\left[\begin{array}{c}
0 \\
0 \\
\vdots \\
0 \\
0 \\
\hline
-F
\end{array}\right]$},
\end{equation}
where the dependence on $\hat{x}$ is omitted for brevity. The resulting constraint matrix has size $(2m + 1) \times 2m$. However, under Assumption~\ref{ass:boundedGradientG}, for each $i$, only one of either the $(2i-1)\text{-th}$ or the $(2i)\text{-th}$ constraint involving $z_{i}$ can be active since $\sign(G_{i}^{-}) = \sign(G_{i}^{+})$ and $\nabla_{x} h(x) \neq 0$ (implying $G_{i}^{-}, G_{+} \neq 0$). In particular, if $G_{i}^{-} \neq G_{i}^{+}$, then the two constraints cannot be simultaneously active, and if $G_{i}^{-} = G_{i}^{+} \neq 0$, then both constraints are equivalent, and either one can be active. Hence, for each $i$, at most one of the two constraints involving $z_{i}$ can be active, ensuring that the QP remains feasible. Consider the submatrix corresponding to the active constraints. This submatrix will have at most $m + 1$ non-zero rows with each row corresponding to an active constraint. Therefore, the submatrix has $m+1$ linearly independent rows and must have full rank, implying the QP with $2m$ decision variables has a non-empty set of feasible solutions. Furthermore, since the quadratic cost function is strictly convex, there exists a unique minimizer. 

Local Lipschitz continuity of $\hat{x} \mapsto \pi(\hat{x}, t)$ follows since the linear independence of the rows of the active constraint matrix satisfies the regularity conditions in \cite[Theorem~3.1]{SCC.Hager.ea1979}, and the functions $(\hat{x}, t) \mapsto \pi_{\operatorname{des}}(\hat{x}, t)$, $\hat{x} \mapsto F(\hat{x})$, $\hat{x} \mapsto G_{i}^{-}(\hat{x})$, and $ \hat{x} \mapsto G_{i}^{+}(\hat{x})$ are locally Lipschitz continuous on $\mathcal{S}$.

Finally, since $\nabla_{x}h(x)g(x)u = \sum_{i = 1}^{m}\left[\nabla_{x}h(x)g(x)\right]_{i} u_{i} \geq \sum_{i=1}^{m}\min\{G_{i}^{-}(\hat{x})u_{i}, G_{i}^{+}(\hat{x})u_{i}\}$, satisfying the QP constraints also ensures \eqref{eq:controlCondition}, then safety follows by Theorem~\ref{thm:robustSafety}.
\end{proof}

Since the constraint in \eqref{eq:qp} is non-smooth, a closed-form analytical solution to the QP cannot be obtained using standard optimization techniques like the KKT optimality conditions \cite{SCC.Bryson.Ho1975}. As a result, a numerical solver is employed to solve the QP in \eqref{eq:qp} and to satisfy the constraints. 

\section{Stabilizing Optimal Control Design}
\label{section:stabilizingControlDesign}

In this section, the desired policy $\pi_{\operatorname{des}}$ in \eqref{eq:qp} is designed by solving an unconstrained version of the optimal control problem in Problem~\ref{prob:coop}, inspired by results from \cite{SCC.Kamalapurkar.Walters.ea2016, SCC.Modares.Lewis2014}, and \cite{SCC.Ogri.Mahmud.ea2023}. The objective is to derive an optimal feedback policy $u^{*}: \mathbb{R}^{n} \to \mathbb{R}^{m}$ that, when applied to the system in \eqref{eq:dynamics}, minimizes the cost functional in \eqref{eq:costFunctional} over the resulting closed-loop trajectories. Assuming the optimal policy exists, the optimal value function $V^{*}: \mathbb{R}^{n}  \to \mathbb{R}$, which is the cost-to-go obtained by starting from initial state $x_{0}$ and following the control policy $u^{*}$, can be expressed as  
\begin{equation}\label{eq:valuefunction}
    V^{*}(x_{0}) \coloneqq \min_{u:\mathbb{R}_{\geq t} \to \mathbb{R}^{m}}\int_{t}^\infty r\left(x \left(\tau\right), u \left(\tau\right)\right)  \mathrm{d}\tau.
\end{equation}
 A general solution to the optimal control problem in \eqref{eq:costFunctional} can be obtained by solving the corresponding Hamilton-Jacobi-Bellman (HJB) equation
\begin{equation}\label{eq:HJB} 
\min_{u\in \mathbb{R}^{m}} \left(\nabla_{x} V^{*}(x)\left(f(x)+g(x)u \right) + Q(x) + u^\top R u\right) = 0,
\end{equation}
with the boundary condition $V^{*}(0) = 0$.
 According to \cite[Theorem~1.5]{SCC.Kamalapurkar.Walters.ea2018}, provided the optimal value function $V^{*}$ is continuously differentiable, it is the unique positive definite solution of the HJB equation. The stabilizing optimal control policy $u^{*}$ can be obtained by solving \eqref{eq:HJB} as
\begin{equation}\label{eq:optimalcontrol}
u^{*}(x) \coloneqq -\frac{1}{2}R^{-1}g(x)^{\top}\left(\nabla_{x}V^{*}(x)\right)^{\top}.
\end{equation}
Computation of $V^{*}$ requires solution of the closed-loop HJB equation 
\begin{multline}\label{eq:optimalHamiltonian}
    -\frac{1}{4}\nabla_{x} V^{*}(x)g(x)R^{-1}g^{\top}(x)\left(\nabla_{x}V^{*}(x)\right)^{\top}\\ + \nabla_{x} V^{*}(x)f(x) + Q(x) = 0, \quad  \forall x \in \mathbb{R}^{n},
\end{multline}
which is difficult to obtain analytically for a general class of nonlinear systems. The following section introduces an approximate form of the optimal control policy in \eqref{eq:optimalcontrol}.

\subsection{Approximate desired control policy}
  To facilitate the approximation of the value function, define the compact set $\Omega \subset \mathbb{R}^{n}$ containing the origin such that $\mathcal{S} \subseteq \Omega$. Since the value function $V^{*}$ is continuous, it can be represented using the universal approximation property of neural networks as
\begin{equation} \label{eq:optimalV}
V^{*}(x) = W^{\top} \sigma(x)+\epsilon(x), \quad \forall x \in \Omega,
\end{equation}
where $W\in\mathbb{R}^{L}$ is an unknown ideal vector of bounded weights, $\sigma:\mathbb{R}^{n}\to\mathbb{R}^{L}$ is a vector of continuously differentiable nonlinear activation functions such that $\sigma\left(0\right)=0$, $L\in\mathbb{N}$ is the number of basis functions, and $\epsilon:\mathbb{R}^{n}\to\mathbb{R}$ is the reconstruction error. Applying the Universal Approximation Theorem \cite[Theorem 1.5]{SCC.Sauvigny2012}, given $\overline{\epsilon} > 0$, the activation functions $\sigma$ can be selected so that the weights and the approximation errors satisfy $\sup_{x  \in \Omega}\|W\| \leq \overline{W}$, $ \sup_{x  \in \Omega}\|\sigma(\cdot)\| \leq \overline{\sigma}$, $ \sup_{x  \in \Omega}\|\nabla_{x}\sigma(\cdot)\| \leq \overline{\nabla\sigma}, \sup_{x  \in \Omega}\|\epsilon(\cdot)\| \leq \overline{\epsilon}$, and $\sup_{x  \in \Omega}\|\nabla_{x}\epsilon(\cdot)\| \leq \overline{\epsilon}$, for some positive constants $\overline{W}$ and $\overline{\sigma} \in \mathbb{R}_{> 0}$. Employing the actor-critic RL approach, actor and critic weights $\hat{W}_{a} \in\mathbb{R}^{L}$ and $\hat{W}_{c} \in\mathbb{R}^{L}$ can be used to estimate the unknown ideal weights $W$ \cite{SCC.Kamalapurkar.Walters.ea2018}. Using the critic weights, the approximate value function, denoted by $\hat{V}:\mathbb{R}^{n}\times\mathbb{R}^{L}\to\mathbb{R}$, is defined as
\begin{equation}\label{eq:VApprox}
\hat{V}(\hat{x},\hat{W}_{c})\coloneqq \hat{W}_{c}^{\top} \sigma(\hat{x}),
\end{equation}
and using the actor weights, the approximate control policy, denoted by $\hat{u}:\mathbb{R}^{n} \times \mathbb{R}^{L} \to \mathbb{R}^{m}$, is defined as
\begin{equation}\label{eq:uApprox}
\hat{u}(\hat{x}, \hat{W}_{a}) \coloneqq -\frac{1}{2}R^{-1}g(\hat{x})^{\top}\nabla_{\hat{x}}\sigma(\hat{x})^{\top}\hat{W}_{a}.
\end{equation}

\subsection{Bellman Error and Simulation of Experience}
Substituting \eqref{eq:VApprox} and \eqref{eq:uApprox} into \eqref{eq:HJB} results in a residual term
 $\hat{\delta}: \mathbb{R}^{n} \times \mathbb{R}^{L} \times \mathbb{R}^{L} \to \mathbb{R}$, commonly referred to as the Bellman error (BE), defined as 
\begin{multline} \label{eq:BE1}
\hat{\delta}(\hat{x}, \hat{W}_{c}, \hat{W}_{a}) \coloneqq   \nabla_{\hat{x}}\hat{V}(\hat{x},\hat{W}_{c})\left(f(\hat{x}) + g(\hat{x})\hat{u}(\hat{x}, \hat{W}_{a})\right) \\
 + Q(\hat{x}) + \hat{u}(\hat{x}, \hat{W}_{a})^{\top} R \hat{u}(\hat{x}, \hat{W}_{a}).
\end{multline}
The control objective is achieved by simultaneously adjusting the actor and critic weights, $\hat{W}_{a}$ and $\hat{W}_{c}$, to minimize the BE.

To estimate the value function, online RL methods require persistence of
excitation (PE) \cite{SCC.Modares.Lewis.ea2013, SCC.Kamalapurkar.Rosenfeld.ea2016}, which is difficult to guarantee in practice. However, through BE extrapolation, stability and convergence of online RL can be established using Assumption~\ref{ass:rankCond} \cite{SCC.Kamalapurkar.Rosenfeld.ea2016}.
To that end, we select a set of trajectories $\left\{ x_{k}: \mathbb{R}_{\geq 0} \to \mathbb{R}^{n} \mid k=1,\cdots, N\right\}$ and extrapolate the BE along these trajectories, where $N \in \mathbb{N}$ denotes the total number of extrapolation points.
Let the actor and critic weight estimation errors be defined as $\tilde{W}_{a} \coloneqq W -\hat{W}_{a}$ and $\tilde{W}_{c}\coloneqq W -\hat{W}_{c}$, respectively. To facilitate the subsequent stability analysis in Section~\ref{section:stabilityAnalysis}, the extrapolated BEs can be expressed in terms of the weight estimation errors $\tilde{W}_{a}$ and $\tilde{W}_{c}$ as
\begin{equation}
    \hat{\delta}_{k} = -\omega_{k}^{\top} \tilde{W}_{c} + \frac{1}{4}\tilde{W}_{a}^{\top}G_{\sigma}\tilde{W}_{a} + \Delta_{k},
\end{equation}
where $\hat{\delta}_{k} \coloneqq  \hat{\delta}(x_{k}, \hat{W}_{c}, \hat{W}_{a})$, $\omega_{k} \coloneqq \sigma_{k}(f_{k}+g_{k}\hat{u}(x_{k},\hat{W}_{a}))$, $\Delta_{k} \coloneqq \frac{1}{2}W^{\top} \nabla_{x_{k}} \sigma_{k} G_{R_{k}} \nabla_{x_{k}} \epsilon_{k}^{\top}+\frac{1}{4}G_{\epsilon_{k}}-\nabla_{x_{k}}\epsilon_{k} f_{k}$, $G_{\epsilon_{k}}\coloneqq \nabla_{x_{k}} \epsilon_{k} G_{R_{k}} \nabla_{x_{k}} \epsilon_{k}^{\top}$, $G_{\sigma_{k}} \coloneqq \nabla_{x_{k}} \sigma_{k} G_{R_{k}} \nabla_{x_{k}} \sigma_{k}^{\top}$, $G_{R_{k}}\coloneqq g_{k}R^{-1}g_{k}^{\top}$, $f_{k} \coloneqq f(
x_{k})$, $g_{k} \coloneqq g(
x_{k})$, $\sigma_{k} \coloneqq \sigma (x_{k})$, and $\epsilon_{k} \coloneqq \epsilon(x_{k})$. Since the ideal weights, the reconstruction error, and the gradient of the reconstruction error are bounded on the compact set $\Omega$, and since $x_{k} \mapsto f(x_{k})$ and $x_{k} \mapsto g(x_{k})$ are locally Lipschitz continuous $x_{k}$ on $\Omega$, there exists a constant $\overline{\Delta}_{k} > 0$ such that $\sup_{x_{k} \in \Omega} |\Delta_{k}| \leq \overline{\Delta}_{k}$ provided the extrapolation trajectories are selected such that $x_{k}(t) \in \Omega$ for all $t\geq 0$ and $k = 1, \hdots, N$.

\subsection{Update laws for Actor and Critic weights}
 The weights are updated using the extrapolated BEs $\hat{\delta}_{k}$ as
\begin{align}
    &\dot{\hat{W}}_{c} =- \frac{k_{c}}{N}\Gamma\sum_{k=1}^{N}\frac{\omega_{k}}{\rho_{k}}\hat\delta_{k},\label{eq:criticUpdate}\\
    &\dot{\Gamma} = \beta\Gamma- \frac{k_{c}}{N}\Gamma\sum_{k=1}^{N}\frac{\omega_{k}\omega_{k}^{\top}}{\rho_{k}^{2}}\Gamma,\label{eq:gammaUpdate}\\
    &\dot{\hat{W}}_{a} = \proj_{\bar{B}(0,\overline{W})}\Bigl(f_{a}(\hat{W}_{a}, \hat{W}_{c})\Bigr),\label{eq:actorUpdate}
\end{align}
with $\Gamma\left(0\right)=\Gamma_{0}$, where $f_a(\hat{W}_a, \hat{W}_c) \coloneqq -k_{a_{1}}(\hat{W}_{a}-\hat{W}_{c})\!+\sum_{k=1}^{N}\frac{k_{c}G_{\sigma_{k}}^{\top}\hat{W}_{a}\omega_{k}^{\top}}{4N\rho_{k}}\hat{W}_{c}-k_{a_{2}}\hat{W}_{a}$, $\Gamma:\mathbb{R}_{\geq 0} \to \mathbb{R}^{L\times L}$
is a time-varying least-squares gain matrix, $\rho_{k}\left(t\right)\coloneqq 1+\nu\omega_{k}^{\top}\left(t\right)\omega_{k}\left(t\right)$, $\nu > 0$ is a constant positive normalization gain, $\beta > 0$ is a constant forgetting factor, $k_{c},k_{a_{1}},k_{a_{2}} > 0$ are constant adaptation gains, and $\proj(\cdot)$ is the smooth projection operator defined in \cite{SCC.Cai.Queiroz.ea2006} and \cite{SCC.Krstic.Kanellakopoulos.ea1995}, employed to ensure that $\|\tilde{W}_{a}(t)\| \leq \overline{W}$ for all $t$. The following rank condition must be satisfied to guarantee convergence of the closed-loop system.
\begin{assum}
    \label{ass:rankCond}There exists a constant $\underline{c}_{1}$ such that the set of extrapolation trajectories $\left\{x_{k}: \mathbb{R}_{\geq 0} \mid k=1,\hdots,N\right\}$ satisfies
    \begin{equation}
    0 < \underline{c}_{1} \leq \inf_{t\in \mathbb{R}_{\geq 0}}\left\{\lambda_{\min}\left(\frac{1}{N}\sum_{k=1}^{N}\frac{\omega_{k}(t)\omega_{k}^{\top} (t)}{\rho_{k}^{2}(t)}\right)\right\}.\label{eq:CLBCPE2}
    \end{equation}
\end{assum}
As described in \cite{SCC.Mahmud.Nivison.ea2021}, since $\omega_{k}$ is a function of $x_{k}$ and $\hat{W}_{c}$,  Assumption \ref{ass:rankCond} cannot be verified \emph{a priori}. However, unlike the PE condition utilized in \cite{SCC.Vamvoudakis.Lewis2010}, Assumption \ref{ass:rankCond} can be monitored online. Furthermore, since $\lambda_{\min}\left\{\sum_{k=1}^{N}\frac{\omega_{k}(t)\omega_{k}^{\top} (t)}{\rho_{k}^{2}(t)}\right\}$ is non-decreasing in the number of samples, $N$, Assumption \ref{ass:rankCond} can be met, heuristically, by increasing the number of 
extrapolation trajectories. The calculation of a precise bound on the number of extrapolation trajectories needed for the subsequent stability analysis is out of the scope of this paper.

The desired locally Lipschitz continuous controller $(\hat{x}, t) \mapsto \pi_{\operatorname{des}}(\hat{x}, t)$ in \eqref{eq:qp} is now defined using the actor weights and the state estimates as 
\begin{equation}\label{eq:uControl}
    \pi_{\operatorname{des}}(\hat{x}, t) \coloneqq \hat{u}\left(\hat{x}, \hat{W}_{a}(t)\right).
\end{equation}
Since an explicit closed-form solution to the QP in \eqref{eq:qp} cannot be derived, the following lemma is necessary to facilitate the subsequent stability analysis in Section~\ref{section:stabilityAnalysis}. 
\begin{lem}
There exists a constant $\epsilon_{\pi} \in \mathbb{R}_{\geq 0}$ such that
\begin{equation}\label{eq:controlBound}
\left\| \pi(\hat{x}, t) - \pi_{\operatorname{des}}(\hat{x}, t)\right\| \leq \epsilon_{\pi}, \quad \forall \hat{x} \in \Omega, \forall t \in \mathbb{R}_{\geq 0}.
\end{equation}
\end{lem}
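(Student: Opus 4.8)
The plan is to observe that the minimizer $\pi(\hat x,t)$ of the QP in \eqref{eq:qp} is precisely the Euclidean projection of $\pi_{\operatorname{des}}(\hat x,t)$ onto the feasible set $\mathcal{F}(\hat x)$ determined by the constraint in \eqref{eq:qp}. This set is nonempty by the feasibility argument already established in the proof of Theorem~\ref{thm:safeControl}, and it is closed and convex because it is the superlevel set of the continuous concave map $u\mapsto F(\hat x)+\sum_{i=1}^m\min\{G_i^-(\hat x)u_i,G_i^+(\hat x)u_i\}$. Consequently, for \emph{any} feasible point $u^\star(\hat x)\in\mathcal{F}(\hat x)$,
\[
\|\pi(\hat x,t)-\pi_{\operatorname{des}}(\hat x,t)\|=\operatorname{dist}\!\big(\pi_{\operatorname{des}}(\hat x,t),\mathcal{F}(\hat x)\big)\le\|u^\star(\hat x)-\pi_{\operatorname{des}}(\hat x,t)\|,
\]
so that $\|\pi-\pi_{\operatorname{des}}\|\le\|u^\star\|+\|\pi_{\operatorname{des}}\|$, and it suffices to bound each term uniformly on $\Omega\times\mathbb{R}_{\ge 0}$.

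Bounding $\pi_{\operatorname{des}}$ is routine: from \eqref{eq:uApprox} and \eqref{eq:uControl}, $\pi_{\operatorname{des}}(\hat x,t)=-\tfrac12 R^{-1}g(\hat x)^\top\nabla_{\hat x}\sigma(\hat x)^\top\hat W_a(t)$; the projection operator in \eqref{eq:actorUpdate} enforces $\|\hat W_a(t)\|\le\overline W$ for all $t$, Assumption~\ref{ass:boundedfunctions} gives $\|g(\hat x)\|\le\overline g$, and the universal-approximation bounds give $\|\nabla_{\hat x}\sigma(\hat x)\|\le\overline{\nabla\sigma}$ on the compact set $\Omega$, whence $\|\pi_{\operatorname{des}}(\hat x,t)\|\le\tfrac12\|R^{-1}\|\,\overline g\,\overline{\nabla\sigma}\,\overline W$. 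To construct a uniformly bounded feasible $u^\star$, I would exploit $\operatorname{sign}(G_i^-(\hat x))=\operatorname{sign}(G_i^+(\hat x))$ from Assumption~\ref{ass:boundedGradientG} and set $u_i^\star(\hat x)=c(\hat x)\operatorname{sign}(G_i^-(\hat x))$. Then $G_i^\pm(\hat x)u_i^\star=c(\hat x)|G_i^\pm(\hat x)|$ for $c(\hat x)\ge 0$, so the constraint collapses to $F(\hat x)+c(\hat x)\rho(\hat x)\ge 0$ with $\rho(\hat x)\coloneqq\sum_{i=1}^m\min\{|G_i^-(\hat x)|,|G_i^+(\hat x)|\}$; the choice $c(\hat x)=\max\{0,-F(\hat x)/\rho(\hat x)\}$ renders $u^\star$ feasible with $\|u^\star(\hat x)\|\le\sqrt m\,c(\hat x)$.

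The main obstacle --- and the place where compactness of $\Omega$ is essential --- is the uniform lower bound $\rho(\hat x)\ge\underline\rho>0$. The sign condition together with $\nabla_x h(x)g(x)\neq 0$ in Assumption~\ref{ass:boundedGradientG} excludes $0$ from the interval $[G_i^-(\hat x),G_i^+(\hat x)]$, so each term $\min\{|G_i^-|,|G_i^+|\}$ is strictly positive; continuity of the $G_i^\pm$ then yields $\underline\rho>0$ and a finite $\sup_\Omega(-F)$ on the compact set. A subtlety I would treat carefully is that Assumption~\ref{ass:boundedGradientG} is posed on $\mathcal S$ while the claim is over $\Omega\supseteq\mathcal S$; I would resolve this either by restricting the bound to the (compact) set in which $\hat x$ actually evolves or by extending the $G_i^\pm$ to $\Omega$ with the sign property preserved. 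Assembling the estimates, $c(\hat x)\le\max\{0,\sup_\Omega(-F)/\underline\rho\}$, and
\[
\epsilon_\pi\coloneqq\sqrt m\,\max\Big\{0,\tfrac{\sup_\Omega(-F)}{\underline\rho}\Big\}+\tfrac12\|R^{-1}\|\,\overline g\,\overline{\nabla\sigma}\,\overline W
\]
is a finite constant satisfying \eqref{eq:controlBound}.
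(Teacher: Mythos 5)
Your proposal is correct, and its skeleton is the same as the paper's: since $\pi(\hat x,t)$ minimizes the distance to $\pi_{\operatorname{des}}(\hat x,t)$ over the ($t$-independent) feasible set, it suffices to exhibit one feasible point with a uniform bound and then apply the triangle inequality together with the bound $\|\pi_{\operatorname{des}}(\hat x,t)\|\le\tfrac12\|R^{-1}\|\,\overline g\,\overline{\nabla\sigma}\,\overline W$ coming from the projection in \eqref{eq:actorUpdate}. The difference lies in the choice of comparison point: the paper takes $u^\star=\pi(\hat x,0)$ and simply \emph{asserts} that it is bounded by a time-independent bound on $\Omega$, whereas you construct $u^\star_i=c(\hat x)\sign(G_i^-(\hat x))$ explicitly and prove the uniform bound $c(\hat x)\le\max\{0,\sup(-F)/\underline\rho\}$ from compactness, continuity of $F,G_i^\pm$, and the sign condition in Assumption~\ref{ass:boundedGradientG}. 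This is a genuine improvement: the uniform boundedness of $\hat x\mapsto\pi(\hat x,0)$ over $\Omega$ is exactly the nontrivial part of the argument, and the paper's proof leaves it unjustified (it follows from continuity of the QP solution on a compact set, but that continuity was only established on $\mathcal S$). You also correctly flag the domain mismatch — the lemma is stated for $\hat x\in\Omega$ while Assumptions~\ref{ass:boundedGradientF}--\ref{ass:boundedGradientG} and the Lipschitz properties of $F,G_i^\pm$ are posed on $\mathcal S$ — which the paper does not address. One minor point to tighten in your write-up: $\nabla_xh(x)g(x)\neq0$ only guarantees that \emph{some} component $[\nabla_xh(x)g(x)]_j$ is nonzero, so individual terms $\min\{|G_i^-|,|G_i^+|\}$ may vanish; your construction survives because $\rho(\hat x)$ remains strictly positive as long as one component does not vanish (and the corresponding $u_i^\star=0$ for the degenerate components is harmless), but the sentence claiming every term is strictly positive should be weakened accordingly.
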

\begin{proof}
The projection of actor weights $\hat{W}_a$ in \eqref{eq:actorUpdate} ensures $(\hat{x}, t) \mapsto \pi_{\operatorname{des}}(\hat{x}, t)$ is bounded for all $t$ provided $\hat{x} \in \Omega$. Let $\pi(\hat{x}, 0)$ be the safe control policy in \eqref{eq:qp} at initial time $t=0$. Since the constraint in \eqref{eq:qp} is time-independent, $\pi(\hat{x}, 0)$ remains feasible for all $t \geq 0$. Furthermore, since $\pi(\hat{x}, t)$ is the optimal solution to \eqref{eq:qp}, by optimality, 
\begin{equation*}
\|\pi(\hat{x}, t) - \pi_{\operatorname{des}}(\hat{x}, t)\| \leq \|\pi(\hat{x}, 0) - \pi_{\operatorname{des}}(\hat{x}, t)\|.
\end{equation*}
Applying the triangle inequality and given that $\pi(\hat{x}, 0)$ and $\pi_{\operatorname{des}}(\hat{x}, t)$ are both bounded by a time-independent bound as long as $\hat{x} \in \Omega$, $\exists \epsilon_{\pi} > 0$, independent of $t$, such that
\begin{equation}
\|\pi(\hat{x}, t) - \pi_{\operatorname{des}}(\hat{x}, t)\| \leq \|\pi(\hat{x}, 0)\| + \|\pi_{\operatorname{des}}(\hat{x}, t)\| \leq \epsilon_{\pi},
\end{equation}
$\forall \hat{x} \in \Omega$ and $\forall t \in \mathbb{R}_{\geq 0}$.
\end{proof}

In the next section, we introduce a \emph{bounded-error} state observer that satisfies $x(t) \in \bar{B}(\hat{x}(t), \varepsilon)$ for all $t \in \mathbb{R}_{\geq 0}$, as required by Theorems~\ref{thm:robustSafety} and \ref{thm:safeControl}, under the control law in \eqref{eq:qp}.

\section{DNN State Observer Design}\label{section:stateEstimatorDesign}

To facilitate the development of the observer, let $A \in \mathbb{R}^{n \times n}$ be a Hurwitz matrix such that the pair $(A, C)$ is observable. Adding and subtracting the term $Ax$ from the system in \eqref{eq:dynamics} yields
 \begin{equation}
 \begin{aligned}
  \dot{x} &= Ax +f_{0}(x)+g(x)u\\
y &= Cx, 
\end{aligned}\label{eq:dynamics_new}
\end{equation}
 where $f_{0}(x) \coloneqq f(x)-Ax$. Since $f$ is unknown, so is $f_{0}$. Motivated by the DNN-based observers in \cite{ SCC.Yang.Liu.Huang.ea2013, SCC.Greene.Bell.ea2023, SCC.Zegers.Sun.ea2023}, the unknown continuous function $f_{0}$ can be approximated on the compact set $\Omega$ using a DNN as 
\begin{equation}
f_{0}(x, t) = \theta^{\top}\phi\left(\Phi(x, t)\right) + \epsilon_{\theta}(x), \quad \forall x\in \Omega, \label{eq:fDNNdynamics}
\end{equation}
where $\epsilon_{\theta}: \mathbb{R}^{n} \to \mathbb{R}^{n}$ is the function approximation error, $\theta \in \mathbb{R}^{p \times n}$ is an unknown bounded ideal output-layer DNN weight matrix, $\phi: \mathbb{R}^{h} \times \mathbb{R}_{\geq 0} \to \mathbb{R}^{p}$ is a vector of continuously differentiable activation functions, $\Phi: \mathbb{R}^{n} \to \mathbb{R}^{h}$ represents the ideal inner layer features of the DNN at time $t$ defined as 
\begin{equation}
    \Phi(x, t) \coloneqq W_{k}^{\top}(t)\varphi_{k}\circ W_{k-1}^{\top}(t)\varphi_{k-1} \circ \hdots \circ  \ W_{1}^{\top}(t)\varphi_{1}\circ x,
\end{equation} for all $x \in \Omega$, where $W_{k}(t)$ and $\varphi_{k}(\cdot)$ denote the weights and activation functions of the $k\text{-th}$ inner layer of the DNN at time $t$, respectively, and $k \in \mathbb{N}$ is the total number of inner layers of the
DNN. Since the inner layer weights are updated intermittently, each weight $W_{k}(t)$ is piecewise constant in time, i.e., $W_{k}(t) = W_{k, j}$, for $ t \in [t_{j}, t_{j+1})$, where $\{t_{j}\}_{j\in \mathbb{N}}$ denote the update times. According to the Universal Approximation Theorem \cite[Theorem 1.5]{SCC.Sauvigny2012}, the activation function $\phi$ can be selected so that given any compact set $\Omega$, the ideal DNN weights, the activation functions, the approximation error and their gradients satisfy $\|\theta\| \leq \overline{\theta} < \infty$,  $\sup_{x\in \Omega}\|\Phi(x, t)\| \leq \overline{\phi}$,  $ \sup_{x\in \Omega}\|\nabla_{(x)}\Phi(x, t)\| \leq \overline{\nabla\phi}$, $\sup_{x\in \Omega}\|\epsilon_{\theta}(x)\| \leq\overline{\epsilon_{\theta}}$, $\sup_{x\in \Omega}\|\nabla_{(x)}\epsilon_{\theta}(x)\| \leq\overline{\epsilon}_{\theta}$ for some constants $\overline{\phi}$, $\overline{\nabla\phi}$, and $\overline{\epsilon_{\theta}} \in \mathbb{R}_{>0}$.
From \eqref{eq:fDNNdynamics}, the dynamics in \eqref{eq:dynamics_new} can be reformulated into an augmented system given as
\begin{equation}
  \dot{x} = Ax + \theta^{\top}\phi\left(\Phi(x, t)\right)  + \epsilon_{\theta}(x)  + g(x)u. \label{eq:dynamicsDNN}
\end{equation}

The estimate of the DNN representation of $f_{0}$ is defined as
\begin{equation}
    \hat{f}_{0}(\hat{x}, \hat{\theta}, t) \coloneqq \hat{\theta}^{\top}\phi\left(\Phi(\hat{x}, t)\right),\label{eq:fEstimateDNNdynamics}
\end{equation} where $\hat{\theta} \in \mathbb{R}^{p \times n}$ is an estimate of the ideal DNN outer layer weight matrix and 
\begin{equation}
    \Phi(\hat{x}, t) \coloneqq W_{k}^{\top}(t)\varphi_{k}\circ W_{k-1}^{\top}(t)\varphi_{k-1}\circ\hdots\circ W_{1}^{\top}(t)\varphi_{1}\circ\hat{x}
\end{equation} is the estimate of the inner layer features.

To estimate the states of the system in \eqref{eq:dynamics}, a Luenberger-like state observer is designed using the DNN-based estimate in \eqref{eq:fEstimateDNNdynamics} and the measured output $y$ as
\begin{equation}\label{eq:DNNDynamicsEstimate}
    \dot{\hat{x}} =  A\hat{x}+ \hat{\theta}^{\top}\phi(\Phi(\hat{x}, t)) + g(\hat{x})u + K\left(y - C\hat{x}\right),
\end{equation}
with $\hat{x}_{0} = \hat{x}(0)$, where $K \in \mathbb{R}^{n \times q}$ is the Luenberger observer gain which is selected such that $( A - KC)$ is a Hurwitz matrix satisfying the Lyapunov equation $(A - KC)^{\top}P + P(A - KC) = -S$ for a given $S = S^{\top} \succ 0$ and some $P = P^{\top} \succ 0$.

Let the state estimation error be defined as $\tilde{x}\coloneqq x - \hat{x}$ with the state estimation error dynamics are given by
\begin{equation}\label{eq:DNNObervererrorDyn}
    \dot{\tilde{x}} = (A-KC)\tilde{x}+  \tilde{\theta}^{\top}\phi\left(\Phi(\hat{x}, t)\right) + e(x, \hat{x}, u, t),
\end{equation}
where $\tilde{\theta} \coloneqq \theta  - \hat{\theta} \in \mathbb{R}^{p \times n}$ is the outer layer weight estimation error and $e(x, \hat{x}, u, t) \coloneqq \theta^{\top}\left(\phi(\Phi(x, t)) - \phi(\Phi(\hat{x}, t))\right) + \left(g(x)-g(\hat{x})\right)u + \epsilon_{\theta}(x)$. The stability analysis in Section~\ref{section:stabilityAnalysis} relies on the following assumption.
\begin{assum}\label{ass:boundedActivation}
    The user-selected inner layer features $(x, t) \mapsto \Phi(x, t)$ are locally Lipschitz continuous in $x$ over $\Omega$, uniformly in $t$. That is, for all $x, \hat{x} \in \Omega$ and all $t \in \mathbb{R}_{\geq 0}$, there exists a constant $L_{\Phi} > 0$ such that
\begin{equation}
\|\Phi(x, t) - \Phi(\hat{x}, t)\| \leq L_{\Phi}\|x - \hat{x}\|.
\end{equation}
\end{assum}

Under Assumption~\ref{ass:boundedActivation} and since $\phi$ is continuously differentiable, the Mean Value Theorem can be invoked to show that $\|\phi\left(\Phi(x, t)\right)-\phi\left(\Phi(\hat{x}, t)\right)\| \leq  \overline{\nabla\phi}L_{\Phi}\|x -\hat{x}\|$,  $\forall x, \hat{x} \in \Omega$, and $\forall t \in \mathbb{R}_{\geq 0}$. The following Lemma is necessary to facilitate the development of an ICL-based update law for the outer layer DNN weight estimates.

\begin{lem}\label{lem:ErrorTermformulation}
Given time delay $\Delta t > 0$, the state trajectory satisfies,
\begin{equation}\label{eq:StateEstimateRelation}
\hat{\mathcal{X}}(t) = \theta^{\top} \mathcal{Y}(t) + \mathcal{G}_{u}(t) + \mathcal{E}(t), \quad \forall t \geq \Delta t,
\end{equation}
where $\hat{\mathcal{X}}(t) = \hat{x}(t) -\hat{x}\left(t-\Delta t\right)$, $\mathcal{Y}(t) \coloneqq \int_{t-\Delta t}^{t} \phi\left(\Phi(\hat{x}(\tau), \tau)\right)\mathrm{d}\tau$, $\mathcal{G}_{u}(t) \coloneqq \int_{t-\Delta t}^{t} A\hat{x}(\tau) + g(\hat{x}(\tau))u(\tau)\ \mathrm{d}\tau$, and $\mathcal{E}(t) \coloneqq  -\tilde{x}(t)+\tilde{x}(t-\Delta t)  + \int_{t-\Delta t}^{t} A\tilde{x}(\tau) + e\left(x(\tau), \hat{x}(\tau), u(\tau), \tau\right) \mathrm{d}\tau$.
\end{lem}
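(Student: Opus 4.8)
The plan is to derive the identity by integrating the true augmented dynamics \eqref{eq:dynamicsDNN} over the window $[t-\Delta t, t]$ and re-expressing every term through the observable quantities $\hat{x}$ and the estimation error $\tilde{x} = x - \hat{x}$. The crucial structural fact is that the ideal outer-layer weight $\theta$ is constant, so it factors out of the integral, whereas the estimate $\hat{\theta}$ does not; this is precisely why integrating \eqref{eq:dynamicsDNN} rather than the observer \eqref{eq:DNNDynamicsEstimate} is the natural route.

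Concretely, for $t \geq \Delta t$ (so the window is well defined) and assuming the trajectory remains in $\Omega$ so that the representation \eqref{eq:fDNNdynamics} holds over the whole interval, I would integrate \eqref{eq:dynamicsDNN} to obtain
\begin{equation*}
x(t) - x(t-\Delta t) = \int_{t-\Delta t}^{t}\!\!\Big( A x + \theta^{\top}\phi(\Phi(x,\tau)) + \epsilon_\theta(x) + g(x)u\Big)\, \mathrm{d}\tau,
\end{equation*}
pulling the constant $\theta^{\top}$ outside its integral. On the left I substitute $x = \hat{x} + \tilde{x}$, which yields $\hat{\mathcal{X}}(t) + \tilde{x}(t) - \tilde{x}(t-\Delta t)$.

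Next I would split each integrand on the right into a contribution evaluated at $\hat{x}$ plus a residual, using $Ax = A\hat{x} + A\tilde{x}$, $g(x)u = g(\hat{x})u + (g(x)-g(\hat{x}))u$, and $\theta^{\top}\phi(\Phi(x,\cdot)) = \theta^{\top}\phi(\Phi(\hat{x},\cdot)) + \theta^{\top}\big(\phi(\Phi(x,\cdot)) - \phi(\Phi(\hat{x},\cdot))\big)$. The $\hat{x}$-contributions assemble exactly into $\mathcal{G}_u(t)$ and $\theta^{\top}\mathcal{Y}(t)$, while the residuals $A\tilde{x}$, $\theta^{\top}(\phi(\Phi(x,\cdot)) - \phi(\Phi(\hat{x},\cdot)))$, $(g(x)-g(\hat{x}))u$, and $\epsilon_\theta(x)$ recombine---by the definition of $e(x,\hat{x},u,t)$ given immediately after \eqref{eq:DNNObervererrorDyn}---into $\int_{t-\Delta t}^{t}\big(A\tilde{x} + e(x,\hat{x},u,\tau)\big)\, \mathrm{d}\tau$.

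Finally, moving the boundary terms $\tilde{x}(t) - \tilde{x}(t-\Delta t)$ to the right-hand side makes the combination $-\tilde{x}(t) + \tilde{x}(t-\Delta t) + \int_{t-\Delta t}^{t}(A\tilde{x} + e)\, \mathrm{d}\tau$ appear, which is exactly $\mathcal{E}(t)$; this establishes \eqref{eq:StateEstimateRelation}. I do not expect a genuine obstacle: the argument is an exact integration followed by bookkeeping. The only points deserving care are keeping the constant $\theta$ outside the integral (in contrast to the time-varying $\hat{\theta}$ and $\tilde{\theta}$, which could not be factored out and motivate integrating the true dynamics), and confirming the window $[t-\Delta t,t]$ lies in the regime where \eqref{eq:fDNNdynamics} is valid.
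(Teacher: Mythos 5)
Your derivation is correct and is exactly the argument the paper intends: the paper's proof is the one-line remark ``Follows from the Fundamental Theorem of Calculus,'' and your integration of \eqref{eq:dynamicsDNN} over $[t-\Delta t, t]$, substitution $x = \hat{x} + \tilde{x}$, and regrouping of the residuals into $e$ and $\mathcal{E}$ is simply that proof written out in full. Your two points of care --- that the constant $\theta$ factors out of the integral (unlike $\hat{\theta}$) and that the window must lie where \eqref{eq:fDNNdynamics} is valid --- are both apt.
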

\begin{proof}
     Follows from the Fundamental Theorem of Calculus.
\end{proof}
The residual term $\mathcal{E}$ has a norm that is small
enough in a sense made precise in the subsequent stability analysis in Section~\ref{section:stabilityAnalysis}.
Lemma~\ref{lem:ErrorTermformulation} implies that the outer layer DNN weight error at any time $t \in \mathbb{R}_{\geq 0}$ can be expressed as 
\begin{equation}\label{eq:linearMeasurement}
    \tilde{\theta}(t)^{\top} \mathcal{Y}(t) + \mathcal{E}(t) = \hat{\mathcal{X}}(t) - \hat{\theta}(\tau)^{\top} \mathcal{Y}(t) - \mathcal{G}_{u}(t). 
\end{equation}
Although the left-hand side of \eqref{eq:linearMeasurement} is affine due to the residual term $\mathcal{E}(t)$, it will be treated as a linear measurement model to develop the update law for the outer-layer weight estimates. Specifically, we will use stored values of the right-hand side of \eqref{eq:linearMeasurement} to update $\hat{\theta}$. Since the residual term $\mathcal{E}(t)$ depends on the state estimation error $\tilde{x}(t)$, it is likely to be large during the transient period where $\tilde{x}(t)$ is large due to the lack of full state feedback, leading to inaccurate estimates of the outer layer weights. Since the observer in \eqref{eq:DNNDynamicsEstimate} ensures that $\tilde{x}(t)$ decays exponentially to a small ball, as proven in the stability analysis in Section~\ref{section:stabilityAnalysis}, the residual term can be made smaller by periodically purging erroneous data from the history stack as described in Algorithm~\ref{algo:parameterEstimatorAlgo}.

At any given time, two separate history stacks, $\mathcal{H}$ and $\mathcal{M}$, are maintained as shown in Algorithm~\ref{algo:parameterEstimatorAlgo}, where $\mathcal{H}$ is the active history stack and $\mathcal{M}$ is the auxiliary history stack. New data are added to the auxiliary history stack. The residual error is decreased by implemented by periodically replacing $\mathcal{H}$ with $\mathcal{M}$ and emptying $\mathcal{M}$. Let $\varrho : \mathbb{R}_{\geq 0} \to \mathbb{N}$ be a switching signal with $\varrho(0) = 1$. For any $t \geq 0$, $\varrho(t) = j + 1$, where $j$ denotes the number of times the replacement $\mathcal{H} \gets \mathcal{M}$ has occurred over the interval $[0, t]$.
For each $s \in \mathbb{N}$, we define the time interval
$\mathcal{I}_{s} = [T_{s-1},T_s) \coloneqq \{t \in \mathbb{R}_{\geq 0} \mid \varrho(t) = s\}$, where $T_s$ denotes the time when the $(s+1)-$th subsystem is switched on, i.e., when $\varrho$ changes from $s$ to $s+1$, with $T_0 \coloneqq 0$.

The history stack that is active during $\mathcal{I}_{s}$ is denoted by $\mathcal{H}_{s}$ and defined as
\begin{equation}\label{eq:HS}
\mathcal{H}_{s}\coloneqq \left\{ ( \mathcal{Y}_{s_{i}},\, \hat{\mathcal{X}}_{s_{i}},\, \mathcal{G}_{u_{s_{i}}})\right\}_{i=1}^{M},
\end{equation}
where $\hat{\mathcal{X}}_{s_{i}} \coloneqq \hat{\mathcal{X}}(t_{i,s})$, $\mathcal{Y}_{s_{i}} \coloneqq \mathcal{Y}(t_{i,s})$, $\mathcal{G}_{u_{s_{i}}} \coloneqq \mathcal{G}_u(t_{i,s})$, and $\{t_{i,s}\}_{i=1}^M\subset \mathcal{I}_{s-1}$ denote the time at which the $i$-th datum in $\mathcal{H}_{s}$ was recorded. Let $\Sigma_{\mathcal{Y}_{s}}: \mathbb{R}_{\geq 0} \to \mathbb{R}^{p\times p}$ and $\Sigma_{\mathcal{E}_{s}}: \mathbb{R}_{\geq 0} \to \mathbb{R}^{p\times n}$ be defined as $\Sigma_{\mathcal{Y}_{s}} \coloneqq \sum_{i=1}^{M}\frac{\mathcal{Y}_{s_{i}}\mathcal{Y}_{s_{i}}^{\top}}{1+\kappa\|\mathcal{Y}_{s_{i}}\|^{2}} \in \mathbb{R}^{p \times p}$ and $\Sigma_{\mathcal{E}_{s}} \coloneqq \sum_{i=1}^{M}\frac{\mathcal{Y}_{s_{i}}\mathcal{E}_{s_{i}}^{\top}}{1+\kappa\|\mathcal{Y}_{s_{i}}\|^{2}} \in \mathbb{R}^{p \times n}$, where $\mathcal{E}_{s_{i}} \coloneqq \mathcal{E}(t_{i,s})$. Note that $H_s$ is recorded over $\mathcal{I}_{s-1}$ and does not change over $\mathcal{I}_s$.

The auxiliary history stack that is being recorded during $\mathcal{I}_s$ is denoted by $\mathcal{M}_{s}$. Let $(\mathcal{Y}_{*},\, \hat{\mathcal{X}}_{*},\, \mathcal{G}_{u_{*}})$ denote a datum obtained at $t=t^{*}\in \mathcal{I}_s$. If
\begingroup\medmuskip=0mu\begin{multline}\label{eq:eigenValueCondition}
    \scalemath{0.97}{\max_{i=1:M}\left\{\lambda_{\min}\left(\Sigma_{\mathcal{Y}_{s}} - \frac{\mathcal{Y}_{s_{i}}\mathcal{Y}_{s_{i}}^{\top}}{1+\kappa\|\mathcal{Y}_{s_{i}}\|^{2}} + \frac{\mathcal{Y}_{*}\mathcal{Y}_{*}^{\top}}{1+\kappa\|\mathcal{Y}_{*}\|^{2}}\right)\right\}} \\-\lambda_{\min}\left(\Sigma_{\mathcal{Y}_{s}}\right) \geq \lambda^{*}
\end{multline}\endgroup
for some user-selected eigenvalue threshold $\lambda^{*} \in \mathbb{R}_{\geq 0}$, then $(\mathcal{Y}_{*},\, \hat{\mathcal{X}}_{*},\, \mathcal{G}_{u_{*}})$ replaces the datum $( \mathcal{Y}_{s_{j}},\, \hat{\mathcal{X}}_{s_{j}},\, \mathcal{G}_{u_{s_{j}}})$ currently stored in $\mathcal{M}_{s}$, where $j$ denotes the maximizer in \eqref{eq:eigenValueCondition}. If $\mathcal{M}_{s}$ is full
and sufficient dwell time has passed, then $\mathcal{M}_{s}$ replaces the data in $\mathcal{H}_{s}$ (i.e., $\mathcal{H}_{s+1}$ is set to be equal to $\mathcal{M}_{s})$ and $\mathcal{M}_{s}$ is emptied. 

The ICL-based update law for the outer-layer weight estimates is then designed as
\begin{equation}\label{eq:outerLayerWeightsUpdate}
    \dot{\hat{\theta}} = \proj_{\bar{B}(0,\overline{\theta})}\left(f_{\theta}(\hat{\theta}, t)\right),
\end{equation}
with $f_{\theta}(\hat{\theta}, t) \coloneqq k_{\theta}\gamma\sum_{i=1}^{M}\frac{\mathcal{Y}_{\varrho(t)_{i}}\left(\hat{\mathcal{X}}_{\varrho(t)_{i}}-\mathcal{G}_{u_{\varrho(t)_{i}}}-\hat{\theta}^{\top}\mathcal{Y}_{\varrho(t)_{i}}\right)^{\top}}{1+\kappa\|\mathcal{Y}_{\varrho(t)_{i}}\|^{2}}$, where $\kappa, k_{\theta} \in \mathbb{R}_{> 0}$, and $\gamma \in \mathbb{R}^{p \times p}$ are user-defined constant adaptation gains. To ensure that the outer layer DNN weight error satisfies the bound $\|\tilde{\theta}(t)\| \leq \overline{\theta}$ for all $t$, the smooth projection operator $\proj(\cdot)$ introduced in \cite{SCC.Cai.Queiroz.ea2006} and \cite[Appendix E]{ SCC.Krstic.Kanellakopoulos.ea1995} is employed.

Under Lemma \ref{lem:ErrorTermformulation}, the outer layer DNN weight estimation error dynamics can be expressed as 
\begin{equation}\label{eq:weightErrorDyn}
    \dot{\tilde{\theta}} = -\proj_{\bar{B}(0,\overline{\theta})}{\left(k_{\theta}\gamma\Sigma_{\mathcal{Y}_{\varrho(t)}}\tilde{\theta} +k_{\theta}\gamma\Sigma_{\mathcal{E}_{\varrho(t)}}\right)},
\end{equation}
respectively. Since $t \mapsto \varrho(t)$ is piecewise continuous, the trajectories of \eqref{eq:weightErrorDyn} are well-defined in the sense of Carathéodory \cite{SCC.Hale1980}. 
It is clear from \eqref{eq:weightErrorDyn} that for the outer layer DNN weight estimation error to converge, the matrix $\Sigma_{\mathcal{Y}_{s}}$ needs to be positive definite for each $s$. Such a positive definite matrix can be obtained if the trajectories are sufficiently informative and the time instances $\{t_{i,s}\}_{i=1}^{M}\subset \mathcal{I}_{s-1}$ are selected carefully. This requirement is formalized by the following assumption.
\begin{assum}\label{ass:regressorRank}
There exist constants $M \in \mathbb{N}$ and $\underline{\sigma}_{\theta}, T > 0$ such that for all $s \in \mathbb{N}$ there exist time instances $\{t_{i,s}\}_{i=1}^M$ such that $T_{s-1} \leq t_{1,s} < t_{2,s} \hdots < t_{M,s} \leq T_{s-1}+T < T_s$ and the history stack $\mathcal{H}_{s}$ in \eqref{eq:HS} satisfies
\begin{equation*}
\underline{\sigma}_{\theta} < \lambda_{\min}\left(\Sigma_{\mathcal{Y}_{s}}\right),
\end{equation*}
where $\lambda_{\min}(\cdot)$ denotes the minimum eigenvalue.
\end{assum}
\begin{algorithm}
\small
\caption{\small  Algorithm for adaptive DNN history stack observer with dwell-time. At each time instance $t$, $\eta(t)$ stores the last time $\mathcal{H}$ was purged, $\Lambda(t)$ stores the highest minimum eigenvalue of $\Sigma_{\mathcal{Y}}$ encountered so far}
\label{algo:parameterEstimatorAlgo}

\KwIn{
\begin{tabular}{ll}
$T > 0$ & Final time horizon\\
$\mathcal{T} \in \mathbb{R}_{\geq 0}$ & Minimum dwell-time\\
$\lambda^{*} \in \mathbb{R}_{\geq 0}$ & Eigenvalue threshold\\
$t^{*} > 0$ & Sampling time\\
$\xi \in (0,1]$ & Purging threshold\\
$[\underline{t},\, \overline{t}] \subseteq [0, T]$ & Offline training interval
\end{tabular}
}


\textbf{Initialize:} $t_{0} \gets 0$, $\eta(t_0) \gets 0$, $\Omega_{\max}(t_0) \gets 0$\;

\While{$t_{0} < T$}{
    \uIf{$t < \underline{t}$}{
        Add data pair $(\hat{x}(t), \dot{\hat{x}}(t))$ to training set $\mathscr{D}$\;
    }
    \uElseIf{$\underline{t} \leq t \leq \overline{t}$}{
        Train inner DNN weights using LM algorithm on $\mathscr{D}$\;
        Save trained inner-layer weights $\{W_{j}^{*}\}_{i=1}^{k}$\;
    }
    \ElseIf{$t > \overline{t}$}{
         $W_{j}(t) \gets W_{j}^{*}$ for all $j = 1,\dots,k$\tcp*{Update weights}
    }
    
    \If{$t - \eta(t) > t^{*}$ and new data available}{
        \uIf{$\mathcal{M}$ is not full}{
            Add the data points to $\mathcal{M}$\;
        }
        \Else{
            Add the data points to $\mathcal{M}$ if condition \eqref{eq:eigenValueCondition} holds\;
        }
        
        \If{$\lambda_{\min}(\Sigma_{\mathcal{Y}}) \geq \xi \Lambda(t)$}{
            \If{$t - \eta(t) \geq \mathcal{T}(t)$}{
                $\mathcal{H} \gets \mathcal{M}$ \tcp*{Purge and replace $\mathcal{H}$} 
                $\mathcal{G} \gets 0$ \;
                $\eta(t) \gets t$\;
                $\Lambda(t) \gets \max\left\{\Lambda(t), \lambda_{\min}(\Sigma_{\mathcal{Y}}(t))\right\}$\;
            }
        }
    }
    $t_{0} \gets t$\;
}
\end{algorithm}

Since the rate of convergence of the outer layer DNN weight-estimation errors depends on $\underline{\sigma}_{\theta}$, the minimum eigenvalue maximization algorithm is utilized for the selection of the time instances $\{t_{i,s}\}_{i=1}^{M} \subset \mathcal{I}_{s-1}$ (see, for example, \cite{SCC.Ogri.Bell.ea2023}). 

The DNN is updated using a two-step process. The weights and biases of the input and hidden layers are trained in a parallel thread using the LM algorithm on input-output data pairs $\{\hat{x}(t_{j}), \dot{\hat{x}}(t_{j})\}_{j \in \mathbb{Z}_{\geq 0}}$ sampled along the trajectories
of \eqref{eq:DNNDynamicsEstimate}. The LM algorithm minimizes the mean squared error (MSE) between the input-output data pairs, i.e., $\text{MSE} = \frac{1}{N} \sum_{j=1}^{N} \bigl\| \hat{x}(t_{j+1}) - \dot{\hat{x}}(t_{j}) \bigr\|^{2}$. The output-layer weights are updated in real-time using the ICL-based update law in \eqref{eq:outerLayerWeightsUpdate}. At the start of the training, the outer layer weights from the main thread are copied to the parallel thread and remain fixed during the training of the hidden layers. After training the hidden layers in the parallel thread, the updated inner layer weights are transferred to the main thread and remain fixed until the next training cycle \cite{SCC.Prabhat.Mateus.ea2023}.

In the following section, a Lyapunov-based stability analysis is utilized to inform the choice of $\varepsilon$ in Theorem~\ref{thm:safeControl} and show uniform ultimate boundedness of the trajectories of the closed-loop system under the desired stabilizing control policy in \eqref{eq:uControl}.

\section{Stability Analysis}\label{section:stabilityAnalysis}

Note that the robustified CBF in \eqref{eq:qp} only guarantees safety if the bounds in Assumptions~\ref{ass:boundedGradientF} and \ref{ass:boundedGradientG} are computed over the set $\bar{B}(0, \varepsilon)$ and if the estimation error is contained within $\bar{B}(0, \varepsilon)$. To compute an approximate $\varepsilon \geq 0$, we first show stability of the entire closed-loop system without invoking Theorem~\ref{thm:robustSafety}
and \ref{thm:safeControl}. The stability analysis allows us to bound the estimation error. The resulting bound informs the value of $\varepsilon$ used in Theorem~\ref{thm:safeControl} to ensure safety.

The stability analysis is carried out separately over the time intervals $\mathcal{I}_{s}$ for $s \in \mathbb{N}$. The history stack that is active during $\mathcal{I}_{s}$ is denoted by $\mathcal{H}_{s}$. Note that $\mathcal{H}_{s}$ is recorded over the previous interval $\mathcal{I}_{s-1}$. Over the interval $\mathcal{I}_{s}$, the history stack $\mathcal{H}_{s}$ is not updated, meaning the matrices $\Sigma_{\mathcal{Y}_{s}}$ and $\Sigma_{\mathcal{E}_{s}}$ are constant. The history stack $\mathcal{H}_{1}$ is initialized using arbitrarily selected vectors satisfying $\hat{x} \in \Omega$ such that $\mathcal{H}_{1}$ satisfies Assumption~\ref{ass:regressorRank}\footnote{This arbitrary initialization of $\mathcal{H}_{1}$ may result in a potentially large initial error $\mathcal{E}_{1}$ in \eqref{eq:StateEstimateRelation}, potentially leading to significant weight estimation errors, $\tilde{\theta}$, during $\mathcal{I}_{1}$. However, as long as $\mathcal{H}_{1}$ satisfies Assumption~\ref{ass:regressorRank}, the first term in \eqref{eq:weightErrorDyn} ensures that $\tilde{\theta}$ remains bounded over $\mathcal{I}_{1}$.}. 

Consider the concatenated state $Z \coloneqq \begin{bmatrix} x^{\top} & \tilde{x}^{\top} & \vect(\tilde{\theta})^{\top} & {\tilde{W}_{a}}^{\top} &{\tilde{W}_{c}}^{\top}
\end{bmatrix}^{\top} \in \mathbb{R}^{n(2+p)+2L}$, where $\vect(\cdot)$ is the vectorization operator. The dynamics of $Z$ are governed by \eqref{eq:dynamics}, \eqref{eq:DNNObervererrorDyn}, \eqref{eq:weightErrorDyn}, \eqref{eq:criticUpdate}, and \eqref{eq:actorUpdate}, under the control policy \eqref{eq:qp}.

A summary of the stability analysis, along with a graphical representation in Fig.~\ref{fig:switchingIntervals}, is provided below. The key idea is to show that the system state $Z$ and the state estimation error $\tilde{x}$ become progressively smaller across the intervals $\mathcal{I}_{s}$.

\textit{Interval $\mathcal{I}_{1}$:} The proof begins by showing that the closed-loop state $Z$ is bounded over $\mathcal{I}_{1}$ under the control policy defined in \eqref{eq:qp}, where the bound is $O\left(\left\|Z(0)\right\| + \left\|\sum_{i=1}^M \mathcal{E}_{1i}\right\| + \overline{\epsilon} + \overline{\epsilon}_{\theta}\right)$\footnote{For a positive function $g$, $f = O(g)$ if there exists a constant $M$ such that $\|f(x)\| \leq M\|g(x)\|, \forall x \in \Omega$.}.
 Utilizing this bound, we compute a value of $\varepsilon$ to guarantee safety and establish conditions for selecting gains to ensure that $\|\tilde{x}(T_{1})\| < \overline{\upsilon}^{-1}(\underline{\upsilon}(\overline{\epsilon}_{x}))$ for a given $\overline{\epsilon}_{x} \in \mathbb{R}_{>0}$, where $\underline{\upsilon},\overline{\upsilon}:\mathbb{R}_{\geq 0}\to\mathbb{R}_{\geq 0}$ are class $\mathcal{K}$ functions that satisfy \eqref{eq:VBound}. 
 
 \textit{Interval $\mathcal{I}_{2}$:} The history stack $\mathcal{H}_{2}$, active during the interval $\mathcal{I}_{2}$, is populated with data collected during $\mathcal{I}_{1}$. Without loss of generality, $\mathcal{H}_{2}$ is can be assumed to represent the system better than the arbitrarily selected $\mathcal{H}_{1}$, i.e., $\left\|\sum_{i=1}^M \mathcal{E}_{1i}\right\| \geq \left\|\sum_{i=1}^M \mathcal{E}_{2i}\right\|$. We then show that the bound on $Z$ over $\mathcal{I}_{2}$ is smaller than that over $\mathcal{I}_{1}$, in particular, $\|\tilde{x}(t)\| \leq \overline{\epsilon}_{x}$ for all $t \in \mathcal{I}_{2}$. Furthermore, the same $\varepsilon$ as interval $\mathcal{I}_{1}$ can be used to solve the QP in \eqref{eq:qp} and guarantee safety over the interval $\mathcal{I}_{2}$. 
  
\textit{Interval $\mathcal{I}_{3}$:} Since history stack $\mathcal{H}_{3}$, active over $\mathcal{I}_{3}$, is recorded over $\mathcal{I}_{2}$, and $\|\tilde{x}(t)\| \leq \overline{\epsilon}_{x}$, for all $t \in \mathcal{I}_{2}$, the error $\left\|\sum_{i=1}^M \mathcal{E}_{3i}\right\|$ is shown to be $O(\overline{\epsilon}_{x} + \overline{\epsilon}_{\theta})$. If $T_{3} = \infty$, then it is established that $\limsup_{t \to \infty} \|Z(t)\| = O(\overline{\epsilon}_{x} + \overline{\epsilon} + \overline{\epsilon}_{\theta})$. If $T_{3} < \infty$, then the tighter bound on $Z$ over $\mathcal{I}_{3}$ (compared to $\mathcal{I}_{2}$) is used to show that $\|\tilde{x}(t)\| \leq \overline{\epsilon}_{x}$ for all $t \in \mathcal{I}_{3}$. The analysis then extends inductively to show $\limsup_{t \to \infty} \|Z(t)\| = O(\overline{\epsilon}_{x} + \overline{\epsilon} + \overline{\epsilon}_{\theta})$ and $\|\tilde{x}(t)\| \leq \overline{\epsilon}_{x}$ for all $t \in [T_{2}, \infty)$. The inductive argument also shows that the same $\varepsilon$ as $\mathcal{I}_{1}$ can be selected to solve the QP in \eqref{eq:qp} and ensure safety for all $t \in [T_{2}, \infty)$.

In summary, we construct an inductive argument to show that the norm $\|\mathcal{E}_{s_{i}}\|$ is decreasing across each subsequent interval $\mathcal{I}_{s}$ for $s \in \mathbb{N}$. Leveraging this decrease in $\|\mathcal{E}_{s_{i}}\|$, we derive sufficient gains conditions to ensure safety and convergence of $Z$ to a neighborhood of the origin.

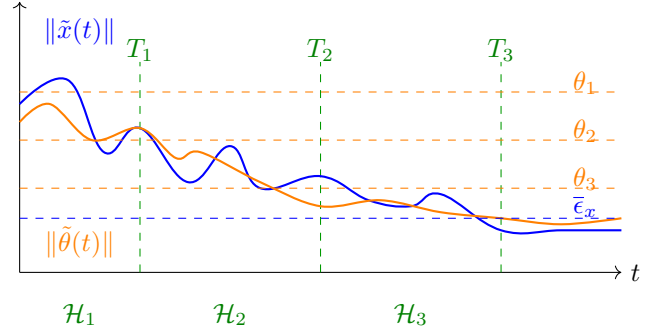
\begin{figure}
  \centering
  \begin{tikzpicture}[scale=0.8]
    \draw[->] (0,0) -- (10,0) node[right] {$t$};
    \draw[->] (0,0) -- (0,4.5) node[above] {};

    \draw[dashed,green!60!black] (2,0) -- (2,3.5);
    \draw[dashed,green!60!black] (5,0) -- (5,3.5);
    \draw[dashed,green!60!black] (8,0) -- (8,3.5);
    \node[green!50!black] at (2,3.7) {$T_{1}$};
    \node[green!50!black] at (5,3.7) {$T_{2}$};
    \node[green!50!black] at (8,3.7) {$T_{3}$};

    \draw[thick,blue] plot[smooth, tension=0.7] coordinates {(0,2.8) (0.8,3.2) (1.4,2.0) (2.0,2.4) (2.8,1.5) (3.5,2.1) (4.0,1.4) (5.0,1.6) (5.8,1.2) (6.5,1.1) (7.0,1.3) (8.0,0.7) (9.0,0.7) (10,0.7)};
    \node[blue] at (1.0,4.0) {$\|\tilde{x}(t)\|$};

    \draw[thick,orange] plot[smooth, tension=0.6] coordinates {(0,2.5) (0.5,2.8) (1.2,2.2) (2.0,2.4) (2.6,1.9) (3.0,2.0) (4.0,1.5) (5.0,1.1) (6.0,1.2) (7.0,1.0) (8.0,0.9) (9.0,0.8) (10,0.9)};
    \node[orange] at (1.0,0.5) {$\|\tilde{\theta}(t)\|$};

    \draw[dashed,blue] (0,0.9) -- (10,0.9);
    \node[blue] at (9.4,1.1) {$\overline{\epsilon}_{x}$};

    \draw[dashed,orange] (0,3) -- (10,3);
    \draw[dashed,orange] (0,2.2) -- (10,2.2);
    \draw[dashed,orange] (0,1.4) -- (10,1.4);
    \node[orange] at (9.4,3.15) {$\theta_{1}$};
    \node[orange] at (9.4,2.35) {$\theta_{2}$};
    \node[orange] at (9.4,1.55) {$\theta_{3}$};

    \node[green!50!black] at (1.0,-0.7) {$\mathcal{H}_{1}$};
    \node[green!50!black] at (3.5,-0.7) {$\mathcal{H}_{2}$};
    \node[green!50!black] at (6.5,-0.7) {$\mathcal{H}_{3}$};
  \end{tikzpicture}
  \caption{Trajectory of error signals over switching intervals.}
  \label{fig:switchingIntervals}
\end{figure}

The following theorem establishes local uniform ultimate boundedness (LUUB) of the closed-loop system.
\begin{thm}\label{thm:boundedAndSafe}
 Consider an arbitrary positive constant $\overline{\epsilon}_{x} > 0$, the dynamics \eqref{eq:dynamics}, the state observer \eqref{eq:DNNDynamicsEstimate}, the adaptive update law \eqref{eq:outerLayerWeightsUpdate}, and the control policy \eqref{eq:uControl}. Provided Assumptions~\ref{ass:boundedfunctions} through \ref{ass:boundedGradientG} hold, if the weights $\hat{\theta}$, $\hat{W}_{c}$, $\Gamma$, and $\hat{W}_{a}$ are updated according to \eqref{eq:outerLayerWeightsUpdate}, \eqref{eq:criticUpdate}, \eqref{eq:gammaUpdate}, and \eqref{eq:actorUpdate}, respectively, the Bellman extrapolation trajectories are selected such that they satisfy Assumption~\ref{ass:rankCond} and $x_{k}(t) \in \Omega$ for all $t \in \mathbb{R}_{\geq 0}$ and $k = 1, \hdots, N$, the history stacks $\mathcal{H}$ and $\mathcal{M}$ are maintained and updated using Algorithm~\ref{algo:parameterEstimatorAlgo}, the control gains are selected to satisfy the sufficient conditions in \eqref{eq:cond1}--\eqref{eq:cond3}, \eqref{eq:cond4}, and \eqref{eq:cond5}, the dwell time $\mathcal{T}$, is selected such that $\mathcal{T}(t) = \mathcal{T}_{s}$, where $\mathcal{T}_{s}$ is selected large enough to satisfy \eqref{eq:finalDwellTime}, and the excitation interval is sufficiently long such that $T_{2} < T$, then the concatenated state $Z(t)$ is LUUB. Furthermore, if the QP in \eqref{eq:qp} is solved by selecting $\varepsilon$ according to \eqref{eq:epsilonChoice}, then the system in \eqref{eq:closedSystem} is guaranteed to be safe with respect to the sets $(\bar{B}(\hat{x}_{0}, \varepsilon), \mathcal{S})$ for all $t \in \mathbb{R}_{\geq 0}$.
\end{thm}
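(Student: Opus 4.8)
The plan is to build a single composite Lyapunov function for the concatenated state $Z$ and to run a switched-systems argument over the intervals $\mathcal{I}_s$, tightening the ultimate bound at each switch until the estimation error is small enough that the selected $\varepsilon$ certifies safety through Theorem~\ref{thm:safeControl}. First I would take the candidate
\[
V_L(Z,t) \coloneqq V^{*}(x) + \tilde{x}^{\top} P \tilde{x} + \tfrac{1}{2}\tr\!\left(\tilde{\theta}^{\top}\gamma^{-1}\tilde{\theta}\right) + \tfrac{1}{2}\tilde{W}_{c}^{\top}\Gamma^{-1}(t)\tilde{W}_{c} + \tfrac{1}{2}\tilde{W}_{a}^{\top}\tilde{W}_{a},
\]
with $V^{*}$ the optimal value function, $P$ the observer Lyapunov matrix, and $\gamma,\Gamma$ the parameter and critic gains. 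Using positive definiteness of $V^{*}$ together with \eqref{eq:qBound}, positive definiteness of $P$ and $\gamma^{-1}$, and boundedness of $\Gamma^{-1}$ obtained from \eqref{eq:gammaUpdate} under Assumption~\ref{ass:rankCond}, I would establish the sandwich $\underline{\upsilon}(\|Z\|) \leq V_L(Z,t) \leq \overline{\upsilon}(\|Z\|)$ for class-$\mathcal{K}$ functions, i.e. the bound \eqref{eq:VBound}.

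Next I would differentiate $V_L$ along the closed loop governed by \eqref{eq:dynamics}, \eqref{eq:DNNObervererrorDyn}, \eqref{eq:weightErrorDyn}, \eqref{eq:criticUpdate}, and \eqref{eq:actorUpdate}. Substituting the HJB identity \eqref{eq:optimalHamiltonian} and the actor policy \eqref{eq:uApprox} turns the value term into $-Q(x)$ plus terms in $\tilde{W}_{a}$ and the reconstruction errors, while the control mismatch $\pi-\pi_{\operatorname{des}}$ is absorbed using the deviation bound \eqref{eq:controlBound}. The observer term yields $-\tilde{x}^{\top}S\tilde{x}$ from the Lyapunov equation plus cross terms from $\tilde{\theta}^{\top}\phi$ and $e(\cdot)$, which I would dominate using Assumption~\ref{ass:boundedActivation} and the constants $\overline{\nabla\phi}, L_{\Phi}, \overline{g}$; the parameter term, via \eqref{eq:weightErrorDyn} and the projection property, contributes $-k_{\theta}\lambda_{\min}(\gamma\Sigma_{\mathcal{Y}_{s}})\|\tilde{\theta}\|^{2}$ plus a $\Sigma_{\mathcal{E}_{s}}$-driven term, where Assumption~\ref{ass:regressorRank} supplies $\underline{\sigma}_{\theta}>0$; and the actor--critic terms follow the ICL--ADP template, with Assumption~\ref{ass:rankCond} providing $-\underline{c}_{1}\|\tilde{W}_{c}\|^{2}$ and the projection keeping $\|\tilde{W}_{a}\|\leq\overline{W}$. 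After completing squares and applying Young's inequality to every cross-coupling, in particular the coupling introduced because learning and control are evaluated at $\hat{x}$ rather than $x$, I would obtain on each $\mathcal{I}_{s}$ an estimate of the form $\dot{V}_L \leq -c\|Z\|^{2} + \iota_{s}$ outside a ball, with $\iota_{s} = O\!\left(\overline{\epsilon} + \overline{\epsilon}_{\theta} + \epsilon_{\pi} + \bigl\|\textstyle\sum_{i=1}^{M}\mathcal{E}_{s_{i}}\bigr\|\right)$.

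I would then run the interval induction. On $\mathcal{I}_{1}$ the arbitrary history stack makes $\iota_{1}$, and hence the bound on $Z$, potentially large but finite. At each switch $T_{s}$ only $\Sigma_{\mathcal{Y}},\Sigma_{\mathcal{E}}$ jump, so $V_L$ is continuous, and the dwell time $\mathcal{T}_{s}$ satisfying \eqref{eq:finalDwellTime} guarantees enough decrease of $V_L$ inside $\mathcal{I}_{s}$ to reach the interval-dependent ultimate bound. The essential monotonicity is that data recorded over $\mathcal{I}_{s-1}$ inherit the smaller estimation error achieved there, so $\bigl\|\sum_{i}\mathcal{E}_{s_{i}}\bigr\|$ and thus $\iota_{s}$ shrink with $s$; this drives $\|\tilde{x}(t)\|\leq\overline{\epsilon}_{x}$ for all $t\geq T_{2}$ and $\limsup_{t\to\infty}\|Z(t)\| = O(\overline{\epsilon}_{x}+\overline{\epsilon}+\overline{\epsilon}_{\theta})$, giving LUUB under the gain conditions \eqref{eq:cond1}--\eqref{eq:cond3}, \eqref{eq:cond4}, and \eqref{eq:cond5}. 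Finally, collecting the worst-case estimation-error bound over $[0,\infty)$ and choosing $\varepsilon$ per \eqref{eq:epsilonChoice} at least this large ensures $x(t)\in\bar{B}(\hat{x}(t),\varepsilon)$ for all $t$; since the QP \eqref{eq:qp} enforces \eqref{eq:controlCondition} on $\bar{B}(\hat{x},\varepsilon)$, safety with respect to $(\bar{B}(\hat{x}_{0},\varepsilon),\mathcal{S})$ follows from Theorems~\ref{thm:robustSafety} and \ref{thm:safeControl}.

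The hard part will be closing the interval induction self-consistently while respecting the observer/CBF interdependency. The bound $\iota_{s}$ that governs the size of $\tilde{x}$ over $\mathcal{I}_{s}$ is itself determined by the accuracy of data recorded over $\mathcal{I}_{s-1}$, which depends on the size of $\tilde{x}$ there, so I must verify that the contraction of $\bigl\|\sum_{i}\mathcal{E}_{s_{i}}\bigr\|$ across intervals is genuine and requires no circular assumption on $\varepsilon$. This is precisely where keeping the closed-loop trajectories inside the fixed compact set $\Omega\supseteq\mathcal{S}$ irrespective of $\varepsilon$, so that every approximation and Lipschitz constant is valid \emph{a priori}, is indispensable, and where the dwell-time condition \eqref{eq:finalDwellTime} must be shown strong enough to produce the per-interval decrease needed to propagate $\|\tilde{x}(t)\|\leq\overline{\epsilon}_{x}$ inductively.
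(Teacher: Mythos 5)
Your overall architecture matches the paper's: the same composite Lyapunov function for $Z$, the same per-interval bounds $\dot{\mathcal{V}}_{s}\leq-\mu(\|Z\|)+\iota_{s}$ obtained by Young's inequality under the gain conditions, the same induction in which $\mathcal{H}_{s}$ inherits the smaller estimation error of $\mathcal{I}_{s-1}$ so that $\overline{\mathcal{E}}_{s}$ and $\iota_{s}$ contract, and the same resolution of the observer/CBF circularity by proving boundedness on $\Omega$ before ever invoking Theorems~\ref{thm:robustSafety} and \ref{thm:safeControl}. That part of the proposal is sound and is essentially the paper's argument.

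The genuine gap is in how you certify safety. You propose to extract the estimation-error bound from the composite ultimate bound on $Z$ (via $\|\tilde{x}\|\leq\|Z\|$) and then ``choose $\varepsilon$ per \eqref{eq:epsilonChoice} at least this large.'' This does not close: the ultimate bound $\Upsilon(\iota_{s})$ on $Z$ also absorbs the actor, critic, and DNN weight errors and is generically far too coarse to fit a ball $\bar{B}(\hat{x}_{0},\varepsilon)$ inside $\mathcal{S}$ --- the paper explicitly notes that \eqref{eq:stateBound} ``does not guarantee safety since $\mathcal{S}\subseteq\Omega$.'' The paper therefore runs a \emph{second, dedicated} Lyapunov analysis on $\Psi(\tilde{x})=\tilde{x}^{\top}P\tilde{x}$ alone (\eqref{eq:Lyap2}--\eqref{eq:XBound1}), treating the weight errors as disturbances already bounded by $\theta_{s}$ and $\overline{W}$ from the composite analysis, and obtains an exponential estimate $\Psi(\tilde{x}(t))\leq(\overline{\Psi}_{s}-\varsigma_{s}/\varkappa)e^{-\varkappa t}+\varsigma_{s}/\varkappa$ via the Comparison Lemma. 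The quantities $\overline{\Psi}_{s}$, $\varsigma_{s}$, and $\varkappa$ produced by this secondary analysis are exactly what define $\varepsilon$ in \eqref{eq:epsilonChoice} and the logarithmic term in the dwell-time condition \eqref{eq:finalDwellTime}; they are also what make $\varepsilon$ tunable through $\lambda_{\min}(S)$ so that $\bar{B}(\hat{x}_{0},\varepsilon)\subset\mathcal{S}$ can actually be enforced. Without this two-tier structure your argument cannot derive \eqref{eq:epsilonChoice}, cannot justify the specific form of \eqref{eq:finalDwellTime}, and cannot propagate $\|\tilde{x}(t)\|\leq\overline{\epsilon}_{x}$ from $\mathcal{I}_{2}$ onward, so the safety half of the theorem remains unproved.
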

\begin{proof}
Consider a continuously differentiable candidate Lyapunov function $\mathcal{V}: \mathbb{R}^{n(2+p)+2L} \times \mathbb{R}_{\geq 0} \to \mathbb{R}$, defined as
\begin{multline}\label{eq:Lyap}
\mathcal{V}\left(Z,t\right) \coloneqq V^{*}(x) + \tilde{x}^{\top}P \tilde{x} + \frac{1}{2}\tr\left(\tilde{\theta}^{\top}\gamma^{-1}\tilde{\theta}\right) \\+ \frac{1}{2}\tilde{W}_{c}^{\top} \Gamma^{-1}(t)\tilde{W}_{c} + \frac{1}{2}\tilde{W}_{a}^{\top}\tilde{W}_{a},
\end{multline}
where $P$ is the solution to the Lyapunov equation below \eqref{eq:DNNDynamicsEstimate} and $\tr(\cdot)$ represents the trace operator. 

We begin by developing bounds that hold regardless of the history stack being employed for parameter estimation. As described in \cite[Lemma~1]{SCC.Kamalapurkar.Rosenfeld.ea2016}, provided Assumption~\ref{ass:rankCond} holds and $\lambda_{\min}\{\Gamma^{-1}(0)\}> 0$, the update laws in \eqref{eq:criticUpdate} and \eqref{eq:gammaUpdate} ensure that the least squares update law satisfies
\begin{equation}\label{eq:Gammabound}
\underline{\Gamma}I_{L}\leq\Gamma(t)\leq\overline{\Gamma}I_{L},
\end{equation}
$\forall t \in \mathbb{R}_{\geq 0}$, where $\overline{\Gamma},\underline{\Gamma} \in \mathbb{R}_{> 0}$ are constants, and $I_{L}$ is an $L$ by $L$ identity matrix. Since the candidate Lyapunov function $\mathcal{V}$ is positive definite and decrescent, the bound in \eqref{eq:Gammabound} can be used to conclude that it is bounded as \cite[Lemma 4.3]{SCC.Khalil2002}
\begin{equation}\label{eq:VBound}
\underline{\upsilon}\left(\left\Vert Z\right\Vert \right)\leq \mathcal{V}\left(Z,t\right)\leq\overline{\upsilon}\left(\left\Vert Z\right\Vert \right),
\end{equation}
for all $t \in \mathbb{R}_{\geq 0}$ and for all $Z\in\mathbb{R}^{n(2+p)+2L}$, where $\underline{\upsilon},\overline{\upsilon}:\mathbb{R}_{\geq 0}\to\mathbb{R}_{\geq 0}$ are class $\mathcal{K}$ functions. Using \eqref{eq:gammaUpdate} and \eqref{eq:Gammabound}, the normalized regressor $\frac{\omega_{k}}{\rho_{k}}$ is bounded as $\sup_{t \in \mathbb{R}_{\geq 0}}\|\frac{\omega_{k}}{\rho_{k}}\| \leq \frac{1}{2\sqrt{\nu\underline{\Gamma}}}$. 
 The Lie derivative of the optimal value function $V^{*}$, along the flow of \eqref{eq:dynamics}, \eqref{eq:criticUpdate}, \eqref{eq:gammaUpdate}, and \eqref{eq:actorUpdate} under the control law in \eqref{eq:qp} is given by
 \begin{equation}\label{eq:valueDiff}
     \dot{V}^{*}\left(Z, t\right) =  {\nabla}_{x}{V^{*}}(x)\Big(f(x)+ g(x)\pi(\hat{x}, t)\Big).
 \end{equation}
Substituting the safe controller from \eqref{eq:qp} into \eqref{eq:valueDiff}, and then adding and subtracting the optimal control policy \eqref{eq:optimalcontrol} and the approximate control policy \eqref{eq:uControl}, yields
\begin{multline}
     \dot{V}^{*}\left(Z, t\right) =  {\nabla}_{x}{V^{*}}(x)\Big(f(x) + g(x)u^{*}(x)\Big)\\
     + {\nabla}_{x}{V^{*}}(x)g(x)\left(\pi_{\operatorname{des}}(\hat{x}, t) - u^{*}(x)\right) \\ + {\nabla}_{x}{V^{*}}(x)g(x)\left(\pi(\hat{x}, t) -\pi_{\operatorname{des}}(\hat{x}, t) \right).
\end{multline}
Let $\mathcal{D} \subset \mathbb{R}^{n(2+p)+L}$ be a compact set defined as $\mathcal{D} \coloneqq \Omega \times (\Omega \oplus \Omega) \times \bar{B}(0, \overline{\theta}) \times \bar{B}(0, \overline{W})$,
where the notation $\oplus$ denotes the Minkowski sum.
By substituting \eqref{eq:HJB}, \eqref{eq:optimalcontrol}, \eqref{eq:uApprox}, \eqref{eq:gammaUpdate}, and using the bound in \eqref{eq:controlBound}, the Lie derivative of the value function $V^{*}$ is bounded on the set $\mathcal{D} \times \mathbb{R}^{L} \times \mathbb{R}_{\geq 0}$ as
\begin{equation}\label{eq:ValueDerivBound}
     \dot{V}^{*}\left(Z, t\right) \leq  -\underline{q}\left(\|x\|\right)
       + \ell_{1}\left\|\tilde{x}\right\| \\
       + \ell_{2}\left\|\tilde{W}_{a}\right\| + \vartheta_{0}, 
\end{equation}
where $\ell_{1} \coloneqq \frac{(\overline{W}^{2}\overline{\nabla\sigma} + \overline{W}\overline{\epsilon})\overline{g}L_{g\sigma}}{2\lambda_{\max}(R)}$, $\ell_{2} \coloneqq \frac{(\overline{W}\overline{\nabla\sigma} + \overline{\epsilon})\overline{g}^{2}\overline{\nabla\sigma}}{2\lambda_{\max}(R)}$, $\vartheta_{0} \coloneqq \overline{g}\epsilon_{\pi}(\overline{W}\nabla_{\sigma} + \overline{\epsilon})$, and $L_{g\sigma}$ denotes the Lipschitz constant for the mapping $x \mapsto g(x)^{\top}\nabla_{x}\sigma(x)^{\top}$ on $\Omega$.

Let $\mathcal{W}(\tilde{W}_{c}, \tilde{W}_{a}, t) \coloneqq \frac{1}{2}\tilde{W}_{c}^{\top} \Gamma^{-1}(t)\tilde{W}_{c} + \frac{1}{2}\tilde{W}_{a}^{\top}\tilde{W}_{a}$. The lie derivative of $\mathcal{W}$ along the flow of \eqref{eq:criticUpdate}--\eqref{eq:actorUpdate} is given by
{\medmuskip=0mu\thinmuskip=0mu\thickmuskip=0mu\begin{equation}\label{eq:weightsLyapDeriv}
     \dot{\mathcal{W}}(Z, t) = - \tilde{W}_{c}^{\top} \Gamma^{-1}\dot{\hat{W}}_{c} -\frac{1}{2} \tilde{W}_{c}^{\top} \Gamma^{-1}\dot{\Gamma}\Gamma^{-1}\tilde{W}_{c} -\tilde{W}_{a}^{\top}\dot{\hat{W}}_{a},
 \end{equation}}with $\dot{\tilde{W}}_{c} = -\dot{\hat{W}}_{c}$, and $\dot{\tilde{W}}_{a} =  -\dot{\hat{W}}_{a}$. By the property of projection operators in \cite[Lemma E.1. IV]{SCC.Krstic.Kanellakopoulos.ea1995}, it holds that $-\tilde{W}_{a}^{\top}\proj_{\bar{B}(0,\overline{W})}\Bigl(f_{a}(\hat{W}_{a}, \hat{W}_{c})\Bigr) \leq - \tilde{W}_{a}^{\top} f_{a}(\hat{W}_{a}, \hat{W}_{c})$. Substituting the update laws in \eqref{eq:criticUpdate}, \eqref{eq:gammaUpdate}, and \eqref{eq:actorUpdate} into \eqref{eq:weightsLyapDeriv},
 since Assumption~\ref{ass:rankCond} holds and the Bellman extrapolation trajectories are selected such that they satisfy $x_{k}(t) \in \Omega$ for all $t \in \mathbb{R}_{\geq 0}$ and $k = 1, \hdots, N$, the Lie derivative of $\mathcal{W}$ is bounded on the set $\mathcal{D} \times \mathbb{R}^{L} \times \mathbb{R}_{\geq 0}$ as
\begin{multline}\label{eq:weightLyapDerivBound} \dot{\mathcal{W}}\left(Z, t\right) \leq -k_{c}\underline{c}\left\|\tilde{W}_{c}\right\|^{2}  - \left(k_{a_{1}}+k_{a_{2}}\right)\left\|\tilde{W}_{a}\right\|^{2}  + \ell_{3}\left\|\tilde{W}_{a}\right\|^{2}
      \\+ \ell_{4}\left\|\tilde{W}_{c}\right\|  + \ell_{5}\left\|\tilde{W}_{a}\right\|\left\|\tilde{W}_{c}\right\| + \ell_{6}\left\|\tilde{W}_{a}\right\|,
\end{multline}
 where $\underline{c} \coloneqq\frac{\beta}{2\overline{\Gamma}k_{c}} + \frac{\underline{c}_{1}}{2}$, $\ell_{3} \coloneqq \frac{k_{c}\overline{G_{\sigma}}_{k}\overline{W}}{8\sqrt{\nu\underline{\Gamma}}}$, $\ell_{4} \coloneqq \frac{k_{c}\overline{\Delta}_{k}}{2\sqrt{\nu\underline{\Gamma}}}$, $\ell_{5} \coloneqq \frac{k_{c}\overline{G_{\sigma}}_{k}\overline{W}}{8\sqrt{\nu\underline{\Gamma}}}+k_{a_{1}}$, and $\ell_{6} \coloneqq \frac{k_{c}\overline{G_{\sigma}}_{k} \overline{W}}{8\sqrt{\nu\underline{\Gamma}}}^{2}+ k_{a_{2}}\overline{W}$.
 
 To facilitate proof by mathematical induction, we now develop bounds that hold over a generic interval $\mathcal{I}_{s}$ assuming that the history stack $\mathcal{H}_{s}$, recorded over $\mathcal{I}_{s-1}$, satisfies Assumption~\ref{ass:regressorRank} and that the states $(x, \tilde{x}, \tilde{\theta}, \tilde{W}_{a})$ remain within the compact set $\mathcal{D}$ over the time interval $\mathcal{I}_{s-1}$. Let $\Theta(Z, t) \coloneqq  \tilde{x}^{\top}P \tilde{x} + \frac{1}{2}\tr(\tilde{\theta}^{\top}\gamma^{-1}\tilde{\theta})$. By the property of projection operators in \cite[Lemma E.1. IV]{SCC.Krstic.Kanellakopoulos.ea1995}, it holds that $-\tilde{\theta}^{\top} \gamma^{-1} \proj_{\bar{B}(0, \overline{\theta})}(k_{\theta}\gamma\Sigma_{\mathcal{Y}_{s}}\tilde{\theta} + k_{\theta}\gamma\Sigma_{\mathcal{E}_{s}}) \leq - \tilde{\theta}^{\top} \gamma^{-1}(k_{\theta}\gamma\Sigma_{\mathcal{Y}_{s}}\tilde{\theta} + k_{\theta}\gamma\Sigma_{\mathcal{E}_{s}})$. Since $(A - KC)^{\top}P + P( A - KC) = - S$, the Lie derivative of $\Theta$ along the flow of \eqref{eq:DNNObervererrorDyn} and \eqref{eq:weightErrorDyn} is bounded over the time interval $\mathcal{I}_{s}$ as
{\medmuskip=1mu\thinmuskip=1mu\thickmuskip=1mu\begin{multline}\label{eq:LyapD}
    \dot{\Theta}_{s}(Z, t) \leq -\tilde{x}^{\top}S\tilde{x} +  2\tilde{x}^{\top}P\tilde{\theta}^{\top}\phi\left(\Phi(\hat{x}, t)\right) + 2\tilde{x}^{\top}Pe(x, \hat{x}, u, t) \\ -k_{\theta}\tr\left(\tilde{\theta}^{\top}\Sigma_{\mathcal{Y}_{s}}\tilde{\theta}\right) - k_{\theta}\tr\left(\tilde{\theta}^{\top}\Sigma_{\mathcal{E}_{s}}\right).
\end{multline}}
 Using the bound in \eqref{eq:controlBound}, the error $(x, \hat{x}, u) \mapsto e(x, \hat{x}, u, t)$ is bounded on the set $\mathcal{D}$ as
\begin{equation}\label{eq:errorBound}
    e(x, \hat{x}, u, t) \leq \vartheta_{1}\left\|\tilde{x}\right\| + \vartheta_{2}\left\|\tilde{W}_{a}\right\| + \overline{\epsilon_{\theta}},
\end{equation}
where $\vartheta_{1} \coloneqq \overline{\theta}\overline{\nabla\phi}L_{\Phi} + \frac{1}{2}L_{gR\sigma}\overline{W}  + \epsilon_{\pi}L_{g}$, $\vartheta_{2} \coloneqq \frac{\overline{g}^{2}\overline{\nabla\sigma}}{\lambda_{\max}(R)}$, and $L_{gR\sigma}$ is a Lipschitz constant for the mapping $x \mapsto g(x)R^{-1}g(x)^{\top}\nabla_{x}\sigma(x)^{\top}$ on $\Omega$. 

Since the states $(x, \tilde{x}, \tilde{\theta}, \tilde{W}_{a})$ remain within the compact set $\mathcal{D}$ over the time interval $\mathcal{I}_{s-1}$, the residual error terms $\mathcal{E}(t_{i, s})$ in Lemma~\ref{lem:ErrorTermformulation} are also bounded as
\begingroup\medmuskip=0mu\thinmuskip=2mu\thickmuskip=2mu\begin{equation}\label{eq:epsilonBound}
    \|\mathcal{E}(t_{i, s})\| \leq \ell_{x}\overline{\tilde{x}}_{\mathcal{I}_{s-1}} +  \ell_{w}\overline{\tilde{W}}_{a_{\mathcal{I}_{s-1}}} + \ell_{\epsilon} \overline{\epsilon}_{\theta}, \quad \forall j= 1, \hdots M,
\end{equation}\endgroup
for some positive constants $\ell_{x}$, $\ell_{w}$, and $\ell_{\epsilon} \in \mathbb{R}_{> 0}$, where $\overline{\tilde{x}}_{\mathcal{I}_{s-1}} \coloneqq \max_{i \in \{1, \hdots, M\}}\sup_{t \in \mathcal{I}_{s-1}}\left\|\tilde{x}(t)\right\|$ and $\overline{\tilde{W}}_{a_{\mathcal{I}_{s-1}}} \coloneqq \max_{i \in \{1, \hdots, M\}}\sup_{t \in \mathcal{I}_{s-1}}\left\|\tilde{W}_{a}(t)\right\|$. Thus, there exist some positive constant $\overline{\mathcal{E}}
_{s} > 0$ such that $\|\mathcal{E}_{s_{i}}\| \leq \overline{\mathcal{E}}
_{s}$ on the interval $\mathcal{I}_{s-1}$ for $s \in \mathbb{N}$.

Since $\mathcal{H}_{s}$ is recorded over the interval $\mathcal{I}_{s-1}$ and $\left\|\frac{\mathcal{Y}_{s_{i}}}{1+\kappa\|\mathcal{Y}_{s_{i}}\|^{2}}\right\| \leq \frac{1}{2\sqrt{\kappa}} ,\forall i \in \{1,\hdots,M\}$, if Assumption~\ref{ass:regressorRank} is satisfied, then the Lie derivative of $\Theta$ is bounded on the set $\mathcal{D} \times \mathbb{R}^{L} \times \mathcal{I}_{s}$ as  
\begin{multline}\label{eq:StateParameterErrorLyapDerivBound}
    \dot{\Theta}_{s}(Z, t) \leq -\lambda_{\min}\left(S\right)\left\|\tilde{x}\right\|^{2} -k_{\theta}\underline{\sigma}_{\theta}\left\|\tilde{\theta}\right\|^{2} + \ell_{7}\left\|\tilde{x}\right\|^{2} + \ell_{8}\left\|\tilde{x}\right\| \\+ \ell_{9}\overline{\mathcal{E}}
_{s}\left\|\tilde{\theta}\right\| + \ell_{10}\left\|\tilde{x}\right\|\left\|\tilde{\theta}\right\| + \ell_{11}\left\|\tilde{x}\right\|\left\|\tilde{W}_{a}\right\|,
\end{multline}
where $\ell_{7} \coloneqq 2\lambda_{\max}\left(P\right)\vartheta_{1}$, $\ell_{8} \coloneqq 2\lambda_{\max}\left(P\right)\overline{\epsilon_{\theta}}$, $\ell_{9} \coloneqq \frac{k_{\theta}M}{2\sqrt{\kappa}}$, $\ell_{10} \coloneqq 2\lambda_{\max}(P)\overline{\phi}$, and $\ell_{11} \coloneqq 2\lambda_{\max}(P)\vartheta_{2}$.

Using the bounds on the Lie derivatives in \eqref{eq:ValueDerivBound}, \eqref{eq:weightLyapDerivBound}, \eqref{eq:StateParameterErrorLyapDerivBound},
and provided the gain conditions \begin{align}
    & \lambda_{\min}\left(S\right) \geq 5\ell_{7},\label{eq:cond1}\\
    &k_{\theta} \geq \frac{15\ell_{10}^{2}}{4\underline{\sigma}_{\theta}\lambda_{\min}\left(S\right)}, \text{ and}\label{eq:cond2}\\
    & \left(k_{a_{1}}+k_{a_{2}}\right) \geq \frac{9\ell_{5}^{2}}{4k_{c}\underline{c}} + \frac{15\ell_{11}^{2}}{4\lambda_{\min}\left(S\right)} + 3\ell_{3},  \label{eq:cond3}
\end{align} are satisfied, then the Lie derivative of the candidate Lyapunov function, $\mathcal{V}$, along the flow of \eqref{eq:dynamics}, \eqref{eq:DNNObervererrorDyn}, \eqref{eq:weightErrorDyn}, \eqref{eq:criticUpdate}, \eqref{eq:gammaUpdate}, and \eqref{eq:actorUpdate} is bounded on the set $\mathcal{D} \times \mathbb{R}^{L} \times \mathcal{I}_{s}$, as
\begin{multline}\label{eq:VorbitDeriv}
     \dot{\mathcal{V}}_{s}(Z,t) \leq-\underline{q}\left(\|x\|\right)-\frac{\lambda_{\min}\left(S\right)}{5}\left\|\tilde{x}\right\|^{2} - \frac{k_{\theta}\underline{\sigma}_{\theta}}{3}\left\|\tilde{\theta}\right\|^{2} \\ -\frac{k_{c}\underline{c}}{3}\left\|\tilde{W}_{c}\right\|^{2}-\frac{\left(k_{a_{1}}+k_{a_{2}}\right)}{3}\left\|\tilde{W}_{a}\right\|^{2} + \iota_{s},
 \end{multline}
 where $\iota_{s} \coloneqq \frac{5(\ell_{1}+\ell_{8})^{2}}{4\lambda_{\min}\left(S\right)} + \frac{3\ell_{9}^{2}\overline{\mathcal{E}}
_{s}^{2}}{4k_{\theta}\underline{\sigma}_{\theta}} + \frac{3\ell_{4}^{2}}{4k_{c}\underline{c}} + \frac{3\left(\ell_{2} + \ell_{6}\right)^{2}}{4\left(k_{a_{1}}+k_{a_{2}}\right)} + \vartheta_{0}$. Let $\mu: \mathbb{R}_{\geq 0} \to \mathbb{R}_{\geq 0}$ be a class $\mathcal{K}_{\infty}$ function that satisfies
\begin{multline}\label{eq:upsilonClassK}
     \mu\left(\left\|Z\right\|\right) \leq \frac{\underline{q}\left(\|x\|\right)}{2}+\frac{\lambda_{\min}\left(S\right)}{10}\left\|\tilde{x}\right\|^{2} + \frac{k_{\theta}\underline{\sigma}_{\theta}}{6}\left\|\tilde{\theta}\right\|^{2} \\ +\frac{k_{c}\underline{c}}{6}\left\|\tilde{W}_{c}\right\|^{2} + \frac{\left(k_{a_{1}}+k_{a_{2}}\right)}{6}\left\|\tilde{W}_{a}\right\|^{2}.
\end{multline}
Such a $\mu \in \mathcal{K}_{\infty}$ exists since $\underline{q}$ is class $\mathcal{K}_{\infty}$ by assumption. Using \eqref{eq:upsilonClassK}, the Lie derivative of the candidate Lyapunov function $\mathcal{V}$ satisfies $\dot{\mathcal{V}}_{s}(Z,t) \leq -\mu\left(\left\|Z\right\|\right)$, for all $Z \in \mathcal{D} \times \mathbb{R}^{L}$ and for all $t \in \mathcal{I}_{s}$. Let $\chi > 0$ be such that $\bar{B}(0, \chi) \subset \mathcal{D} \times \mathbb{R}^{L}$. If the sufficient condition
\begin{equation}\label{eq:cond4}
  \mu^{-1}(\iota_{s}) \leq  \overline{\upsilon}^{-1}\left(\underline{\upsilon}(\chi)\right)\footnote{Satisfaction of \eqref{eq:cond4} is possible by appropriately selecting points for BE extrapolation such that $\underline{c}$ and control gains ($k_{c}, k_{a_{1}}$, and $k_{a_{2}}$) are sufficiently large enough, and choosing the basis for the value function approximation make $\overline{\epsilon}$ sufficiently small.},
\end{equation} is met, then \cite[Theorem~4.18]{SCC.Khalil2002} can be invoked to conclude that there exist a class $\mathcal{KL}$ function $\beta_{s}$ such that for every initial condition $Z(T_{s-1})$, satisfying $\|Z(T_{s-1})\| \leq \overline{\upsilon}^{-1}(\underline{\upsilon}(\chi))$, the state of the closed loop system is bounded on $\mathcal{I}_{s}$ as
\begin{equation}\label{eq:stateBound}
\|Z(t)\| \leq \max\Bigl\{\beta_{s}(\|Z(T_{s-1})\|, t - T_{s-1}),\Upsilon(\iota_{s})\Bigr\},
\end{equation}
where $\Upsilon(\iota_{s}) = \underline{\upsilon}^{-1}(\overline{\upsilon}(\mu^{-1}(\iota_{s})))$ is the ultimate bound. 

Let $T_{s, \operatorname{settle}}$ represent the settling time for the state $Z$ to reach the ultimate bound $\Upsilon(\iota_{s})$, i.e., $\beta_{s}(\|Z(T_{s-1})\|, T_{s, \operatorname{settle}}) \leq \Upsilon(\iota_{s})$. Since $\iota_{s}$ depends on $\mathcal{E}_{s}$, it can be made smaller by minimizing the estimation errors $\tilde{x}(t)$ associated with the data stored in the history stack. 

Now we begin the mathematical induction argument. Since we populate the history stack $\mathcal{H}_{1}$ using arbitrary vectors $\hat{x}$ that satisfy $\hat{x} \in \Omega$, if the gains are selected such that $\iota_{1}$ satisfies \eqref{eq:cond4}, i.e., $\mu^{-1}(\iota_{1}) \leq  \overline{\upsilon}^{-1}\left(\underline{\upsilon}(\chi)\right)$, then \eqref{eq:stateBound} holds on $\mathcal{I}_{1}$. Let $\overline{\beta_{1}} \coloneqq \sup_{t \in \mathcal{I}_{1}}\{\beta_{1}(\|Z(0)\|, t)\}$ and $\theta_{1} \coloneqq \max\left\{\overline{\beta_{1}}, \Upsilon(\iota_{1})\right\}$. Since $\|\tilde{\theta}(t)\|^{2} \leq \|Z(t)\|^{2}$, it follows from \eqref{eq:stateBound} that  
\begin{equation}\label{eq:theta1}
    \|\tilde{\theta}(t)\| \leq  \theta_{1}, \quad \forall t \in \mathcal{I}_{1}.
\end{equation}

The following analysis establishes a bound on the state estimation error that is sharper than the bound in \eqref{eq:stateBound}. Consider the positive definite candidate Lyapunov function
\begin{equation}\label{eq:Lyap2}
\Psi(\tilde{x}) \coloneqq \tilde{x}^{\top}P \tilde{x},
\end{equation}
which satisfies the inequality $\lambda_{\min}(P)\|\tilde{x}\|^{2}\leq \Psi\left(\tilde{x}\right)\leq \lambda_{\max}(P)\|\tilde{x}\|^{2}$. The Lie derivative of $\Psi$ along the flow of \eqref{eq:DNNObervererrorDyn} is given by 
\begin{multline}\label{eq:lyapError}
    \dot{\Psi}(\tilde{x}, t) = -\tilde{x}^{\top}S\tilde{x} +  2\tilde{x}^{\top}P\tilde{\theta}^{\top}(t)\phi\left(\Phi(\hat{x}(t), t)\right) \\ + 2\tilde{x}^{\top}Pe(x(t), \hat{x}(t), u(t), t).
\end{multline} 
If the outer layer weight estimation error $\tilde{\theta}(t)$ is bounded over $\mathcal{I}_{1}$, then the Lie derivative of $\Psi$ is bounded over $\mathcal{I}_{1}$ as
\begin{multline}\label{eq:WorbitDeriv1}
     \dot{\Psi}_{1}(\tilde{x}, t) \leq -\lambda_{\min}\left(S\right)\left\|\tilde{x}\right\|^{2}  + \ell_{7}\left\|\tilde{x}\right\|^{2} \\+ (\ell_{8}+ \ell_{10}\theta_{1} + \ell_{11}\overline{W})\left\|\tilde{x}\right\|.
 \end{multline}

 Provided the gain condition
\begin{equation}
 \lambda_{\min}\left(S\right) > 3\ell_{7} \label{eq:cond5}
\end{equation} is satisfied, the Lie derivative of $\Psi$ is bounded on the interval $\mathcal{I}_{1}$ as $\dot{\Psi}_{1}(\tilde{x},t)\leq -\varkappa\Psi(\tilde{x}) + \varsigma_{1}$, where $\varkappa \coloneqq \frac{\lambda_{\min}(S)}{6\lambda_{\max}(P)}$ and $\varsigma_{1} \coloneqq \frac{3(\ell_{8}+ \ell_{10}\theta_{1} + \ell_{11}\overline{W})^{2}}{4\lambda_{\min}\left(S\right)}$. Then by the Comparison Lemma \cite[Lemma~3.4]{SCC.Khalil2002}, we can conclude that
\begin{equation}\label{eq:WBound2}
    \Psi(\tilde{x}(t), t) \leq \left(\overline{\Psi}_{1} - \frac{\varsigma_{1}}{\varkappa}\right)\exp(-\varkappa t) +\frac{\varsigma_{1}}{\varkappa},
\end{equation}
for all $t \in \mathcal{I}_{1}$, where $\overline{\Psi}_{1} > 0$ is a constant such that $\vert\Psi(\tilde{x}(0), 0)\vert \leq \overline{\Psi}_{1}$. In particular, it can be concluded that for all $t \in \mathcal{I}_{1}$
\begin{equation}\label{eq:XBound1}
     \|\tilde{x}(t)\| \leq \sqrt{\frac{\lambda_{\max}(P)}{\lambda_{\min}(P)}\max\left\{\overline{\Psi}_{1}, \frac{\varsigma_{1}}{\varkappa}\right\}} \eqqcolon \overline{\tilde{x}}_{1}.
\end{equation}
Selecting the dwell time $\mathcal{T}_{1}$ large enough to satisfy $\mathcal{T}_{1} \geq \max\left\{ -\frac{1}{\varkappa} \ln\left( \frac{\varsigma_{1}/\varkappa}{\overline{\Psi}_{1} - \varsigma_{1}/\varkappa} \right),\; T_{1,\mathrm{settle}} \right\}$, it can be concluded that $\vert\Psi(\tilde{x}(T_{1}), T_{1})\vert \leq \frac{2\varsigma_{1}}{\varkappa}$. As a result, $\|\tilde{x}(T_{1})\| \leq \sqrt{\frac{\lambda_{\max}(P)}{\lambda_{\min}(P)}\left(\frac{2\varsigma_{1}}{\varkappa}\right)}$. Furthermore, provided the dwell time condition for $\mathcal{T}_{1}$ is satisfied, $\|Z(T_{1})\| \leq \Upsilon(\iota_{1})$. Note that the bound on $\tilde{x}(T_{1})$ can be made arbitrarily small by choosing the positive definite matrix $S$ in \eqref{eq:cond5} with a sufficiently large minimum eigenvalue.

We now discuss safety over the interval $\mathcal{I}_{1}$. While the bound in \eqref{eq:stateBound} holds, it does not guarantee safety since $\mathcal{S} \subseteq \Omega$. For the first interval $\mathcal{I}_{1}$, we select 
 \begin{equation}\label{eq:epsilonChoice}
     \varepsilon \coloneqq \sqrt{\frac{\lambda_{\max}(P)}{\lambda_{\min}(P)}\left(\max\left\{\overline{\Psi}_{1}, \frac{2\varsigma_{1}}{\varkappa}\right\}\right)}
 \end{equation}
 for the QP in \eqref{eq:qp}. To ensure safety over $\mathcal{I}_{1}$, Theorem~\ref{thm:safeControl} requires $\bar{B}(\hat{x}_{0}, \varepsilon) \subset \mathcal{S}$. For a given $\hat{x}_0$, let $\bar\varepsilon\coloneqq\sup(\varepsilon\mid\bar B(\hat x_0,\varepsilon)\subset \mathcal{S}$. Safety over $\mathcal{I}_1$ can then be guaranteed if $\varepsilon \leq \bar\varepsilon$, or equivalently, if $\overline{\Psi}_1 \leq \frac{\lambda_{\min}(P)}{\lambda_{\max}(P)}\bar\varepsilon^2$ and $\frac{2\varsigma_{1}}{\varkappa} \leq \frac{\lambda_{\min}(P)}{\lambda_{\max}(P)}\bar\varepsilon^2$. The former provides a set of admissible initial guesses for state estimation and the latter can be ensured by selecting $S$ in \eqref{eq:cond5} with a sufficiently large minimum eigenvalue. Consequently, the conditions of Theorem~\ref{thm:safeControl} are satisfied and the controller guarantees safety with respect to $(\bar{B}(\hat{x}_{0}, \varepsilon), \mathcal{S})$ for all $t \in \mathcal{I}_{1}$.

Next, we extend the analysis to the interval $\mathcal{I}_{2}$. Given that the history stack $\mathcal{H}_{2}$ active during $\mathcal{I}_{2}$ is recorded during $\mathcal{I}_{1}$, the bound in \eqref{eq:epsilonBound} can be used to show that the data used to construct the history stack $\mathcal{H}_{2}$ satisfy $\overline{\mathcal{E}}_{2} = O(\overline{\tilde{x}}_{1} + \epsilon_{\theta})$. 

Given that the history stack $\mathcal{H}_{2}$ provides a more accurate representation of the system dynamics than the arbitrarily selected $\mathcal{H}_{1}$, which is independent of the system trajectories, $\overline{\mathcal{E}}_{1}$ can be selected, without loss of generality, such that $\overline{\mathcal{E}}_{2} < \overline{\mathcal{E}}_{1}$, and as a result, $\iota_{2} < \iota_{1}$. Thus, given a sufficiently large constant $\overline{\mathcal{V}}_{1}$ satisfying $\|\mathcal{V}(Z(0), 0)\| \leq \overline{\mathcal{V}}_{1}$ and $2\underline{\upsilon}(\Upsilon(\iota_{1})) < \overline{\mathcal{V}}_{1}$, it follows that
$\mathcal{V}(Z(t), t) \leq \underline{\upsilon}(\Upsilon(\iota_{1})) \eqqcolon \overline{\mathcal{V}}_{2} < \overline{\mathcal{V}}_{1}, \, \forall t \in \mathcal{I}_{2}$.
The gain conditions in \eqref{eq:cond1}-\eqref{eq:cond3} thus hold over $\mathcal{I}_{2}$. Consequently, we can conclude that the bound
\begin{equation}\label{eq:theta2}
    \|Z(t)\| \leq \underline{\upsilon}^{-1}(\overline{\mathcal{V}}_{2})  \Rightarrow  \|\theta(t)\| \leq \underline{\upsilon}^{-1}(\overline{\mathcal{V}}_{2}) \eqqcolon \theta_{2}
\end{equation}
holds for all $t \in \mathcal{I}_{2}$.
Since $\theta_{2} < \theta_{1}$, all terms involving $\|\tilde{\theta}\|$ (e.g., in \eqref{eq:WorbitDeriv1}) are strictly smaller, and as such, if the gain condition \eqref{eq:cond5} remains satisfied, we can conclude using a Lyapunov analysis similar to \eqref{eq:Lyap2}--\eqref{eq:WBound2} that $\forall t \in \mathcal{I}_{2}$,
\begin{equation}\label{eq:XBound2}
    \|\tilde{x}(t)\| \leq \sqrt{\frac{\lambda_{\max}(P)}{\lambda_{\min}(P)}\left(\max\left\{\overline{\Psi}_{2}, \frac{\varsigma_{2}}{\varkappa}\right\}\right)} \eqqcolon \overline{\tilde{x}}_{2}, 
\end{equation}
where $\varsigma_{2} \coloneqq \frac{3(\ell_{8}+ \ell_{10}\theta_{2} + \ell_{11}\overline{W})^{2}}{4\lambda_{\min}\left(S\right)}$ and $\overline{\Psi}_{2} = \frac{2\varsigma_{1}}{\varkappa}$. 
Since $\varsigma_{2} < \varsigma_{1}$, safety over the interval $\mathcal{I}_{2}$ follows provided $\varepsilon$ from \eqref{eq:epsilonChoice} is used to solve the QP in \eqref{eq:qp}.

Now, consider the interval $\mathcal{I}_{3}$. Since $\mathcal{H}_{3}$ is active during $\mathcal{I}_{3}$ and recorded during $\mathcal{I}_{2}$, the bounds in \eqref{eq:epsilonBound} and \eqref{eq:XBound2} can be used to show that $\overline{\mathcal{E}}_{3} \coloneqq O(\overline{\tilde{x}}_{2} + \overline{\epsilon}_{\theta})$. By selecting the positive definite matrix $S$ in \eqref{eq:cond5} with a sufficiently large minimum eigenvalue, we ensure $\varsigma_{1}$ is small enough, thus making $\overline{\Psi}_{2}$ sufficiently small such that \eqref{eq:XBound2} satisfies $\overline{\tilde{x}}_{2} < \overline{\tilde{x}}_{1}$. As a result, $\overline{\mathcal{E}}_{3} < \overline{\mathcal{E}}_{2}$ which implies $\iota_{3} < \iota_{2}$. If the dwell-time $\mathcal{T}_{2}$ is large enough to satisfy
$\mathcal{T}_{2} \geq \max\left\{ -\frac{1}{\varkappa} \ln\left( \frac{\varsigma_{2}/\varkappa}{\overline{\Psi}_{2} - \varsigma_{2}/\varkappa} \right),\; T_{2,\mathrm{settle}} \right\}$, then it follows that $\mathcal{V}(Z(T_{2}), T_{2}) \leq \underline{\upsilon}(\Upsilon(\iota_{2})) \eqqcolon \overline{\mathcal{V}}_{3} < \overline{\mathcal{V}}_{2}$. Let us define $\theta_{3} \coloneqq \sup_{t \in \mathcal{I}_{3}} \|\theta(t)\|$. Since $\iota_{3} < \iota_{2}$ and class $\mathcal{K}$ functions $\mu^{-1}$ and $\underline{\upsilon}^{-1}$ are strictly increasing, it follows that $\theta_{3} < \theta_{2}$.
Therefore, the gain conditions in \eqref{eq:cond1}-\eqref{eq:cond3}, \eqref{eq:cond4}, and \eqref{eq:cond5} are satisfied over $[T_{2,} T_{3})$, and a Lyapunov-based analysis, similar to \eqref{eq:Lyap2}--\eqref{eq:WBound2}, can be used to show that
\begin{equation}\label{eq:XBound3}
     \|\tilde{x}(t)\| \leq \sqrt{\frac{\lambda_{\max}(P)}{\lambda_{\min}(P)}\left(\max\left\{\overline{\Psi}_{3}, \frac{\varsigma_{3}}{\varkappa}\right\}\right)},
\end{equation}
for all $t \in \mathcal{I}_{3}$, where $\overline{\Psi}_{3} = \frac{2\varsigma_{2}}{\varkappa}$. Since $\varsigma_{3} < \varsigma_{2}$, safety over the interval $\mathcal{I}_{3}$ follows provided $\varepsilon$ from \eqref{eq:epsilonChoice} is used to solve the QP in \eqref{eq:qp}. Given any $\bar{\epsilon}_{x} > 0$, the matrix $S$ in \eqref{eq:cond5} can be chosen with minimum eigenvalue sufficiently large enough such that $\overline{\tilde{x}}_{2} \leq \bar{\epsilon}_{x}$, and as result $\|\tilde{x}(t)\| \leq \bar{\epsilon}_{x}$, for all $t \in \mathcal{I}_{3}$.

The next step is to demonstrate that this analysis can be extended to $[0, \infty)$. If $T_{3} = \infty$, Barbalat’s Lemma \cite[Lemma~8.2]{SCC.Khalil2002} ensures that $\limsup_{t \to \infty}\mathcal{V}(Z(t), t) \leq \underline{\upsilon}(\Upsilon(\iota_{3}))$, where $\iota_{3} = O( \overline{\mathcal{E}}_{3}^2 +\overline{\epsilon}^{2} + \overline{\epsilon}_{\theta}^2)$ and $\overline{\mathcal{E}}_{3} = O(\overline{\tilde{x}}_{2} + \overline{\epsilon}_{\theta})$, which implies that $\limsup_{t \to \infty} Z(t) = O(\overline{\epsilon}_{x} + \overline{\epsilon} + \overline{\epsilon}_{\theta})$.

If $T_{3} < \infty$, the argument extends recursively to subsequent intervals $\mathcal{I}_{s}$ provided the dwell-time satisfies
\begin{equation}\label{eq:finalDwellTime}
        \mathcal{T}_{s} \geq \max\left\{-\frac{1}{\varkappa} \ln\left( 
    \frac{\varsigma_{s} / \varkappa }{ \overline{\Psi}_{s} - \varsigma_{s} / \varkappa }
    \right), T_{s, \operatorname{settle}} \right\},
\end{equation}
so that the gain conditions \eqref{eq:cond1}-\eqref{eq:cond3}, \eqref{eq:cond4}, and \eqref{eq:cond5} hold for all $t > T_{3}$, and the state satisfies 
\begin{equation}\label{eq:XBoundFinal}
    \|\tilde{x}(t)\| \leq \overline{\epsilon}_{x}, \quad \forall t \geq T_{3},
\end{equation}
with $\tilde{x}(t) \in \Omega$ for all $t \in \mathbb{R}_{\geq 0}$. Furthermore, $\overline{\mathcal{E}}_{s} \leq \overline{\mathcal{E}}_{s-1}$ , $\iota_{s} \leq \iota_{s-1}$, $\overline{\mathcal{V}}_{s} \leq \overline{\mathcal{V}}_{s-1}$, and $\theta_{s} \leq \theta_{s-1}$, for all $s > 3$.
 The bound in \eqref{eq:XBoundFinal} and the fact that $\overline{\mathcal{E}}_{s} = O(\overline{\tilde{x}}_{s-1} + \overline{\epsilon}_{\theta})$ indicate $\overline{\mathcal{E}}_{s} = O(\overline{\epsilon}_{x} + \epsilon_{\theta})$ for all $s \in \mathbb{N}$.
This, combined with the dwell time requirement, implies $\limsup_{t\to \infty} \mathcal{V}(Z(t), t) \leq \underline{\upsilon}(\Upsilon(\iota_{s}))$, and hence, $\limsup_{t \to \infty}\|Z(t)\| = O(\overline{\epsilon}_{x} + \overline{\epsilon} + \overline{\epsilon}_{\theta})$. Furthermore, given that $\overline{\tilde{x}}_{s} \leq \overline{\tilde{x}}_{s-1}$ for all $s \in \mathbb{N}$, safety over the interval $[0, \infty)$ is guaranteed when $\varepsilon$ from \eqref{eq:epsilonChoice} is used to solve the QP in \eqref{eq:qp}.
\end{proof}

\section{Simulation Results}\label{section:simulation}
\subsection{Staying Within a Convex Set}\label{sim:convexSet}
To demonstrate how the developed controller can be used to ensure that the trajectories of the system in \eqref{eq:dynamics} stay within a given set despite the lack of full state measurement, the example given in \cite{SCC.Jankovic.ea2018} is considered. The dynamics of the system are of the form in \eqref{eq:dynamics},
with states $x$ = $[x_{1}, x_{2}]^\top$, where 
 \begin{gather}\label{eq:simDyn1}
f(x) =\begin{bmatrix}
-0.6x_{1} - x_{2} \\
 x_{1}^{3} \nonumber
\end{bmatrix}, \
g(x) =  
\begin{bmatrix} 
0 \\ x_{2}
\end{bmatrix} \text{, and }C = \begin{bmatrix}
    1 & 0
\end{bmatrix}. \end{gather}
The control objective is to regulate the trajectory of the system to the origin by minimizing the cost functional in \eqref{eq:costFunctional} while simultaneously ensuring that the trajectory remains confined within the set $\mathcal{S}$ described in \eqref{eq:safeSet1}--\eqref{eq:safeSet3}, as defined by $h(x) = -x_{2}^{2} - x_{1} + 1$. For the cost function in \eqref{eq:cost}, the state and control penalty are selected as $Q = \mathcal{I}_{2}$ and $R = 1$, respectively. The basis for approximation of the value function is selected as $\sigma(x) = 
[x_{1}^{2}, x_{1}x_{2}, x_{2}^{2}]^{\top}$. The actual states, the estimated states, the actor weights, the critic weights, and the least squares gain matrix are initialized as $x(0) = [-2, 1]^{\top}$, $\hat{x}(0) = [-2.5, 1.5]^{\top}$, $\hat{W_{a}}(0) = 0.5\mathbf{1}_{3 \times 1}$, $\hat{W_{c}}(0) = \mathbf{1}_{3 \times 1}$, and $\Gamma (0) = 0.5\mathcal{I}_{3\times3}$, respectively. The learning gains are selected as $k_{a_{1}} = 0.5$, $k_{a_{2}} = 0.1,$ $k_{c} = 5$, $\nu=0.7$, and $\beta = 0.01$. The functions $F(\hat{x})$, $G_{i}^{-}(\hat{x})$, and $G_{i}^{+}(\hat{x})$ are determined using Lipschitz constants on $\bar{B}(\hat{x}, \varepsilon)$ with $\varepsilon = 0.7$, all of which are assumed to be $0.2$. The simulation uses 100 fixed Bellman error extrapolation points selected from a $2\times 2$ square centered around the origin of the system. The QP optimization problem in Problem~\ref{prob:coop} is solved using \texttt{quadprog} on MATLAB to obtain the safe controller in \eqref{eq:qp}.
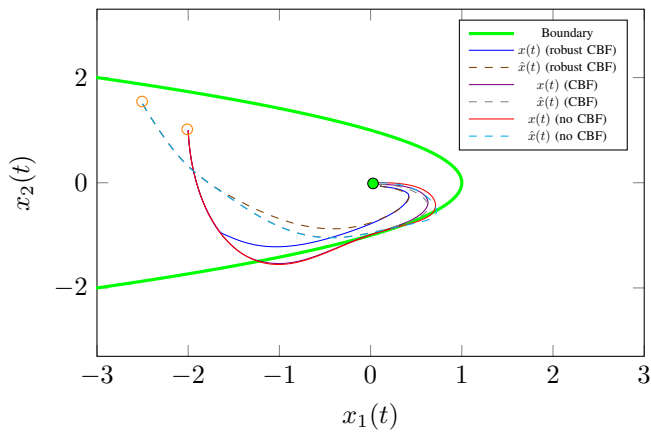
\begin{figure}
        \centering
        \begin{tikzpicture} 
\begin{axis}[
    xlabel={$x_{1}(t)$},
    ylabel={$x_{2}(t)$},
    xmin=-3, xmax=3,
    ymin=-3, ymax=3,
    legend pos = north east,
    legend style={nodes={scale=0.5, transform shape}},
    enlarge y limits=0.05,
    width=\linewidth,
    height=0.7\linewidth,
    view={0}{90},
]

\addplot[
    domain=-3:3,
    samples=400,
    very thick,
    green
] ({1 - x^2}, {x});
\addlegendentry{Boundary};

\addplot [color = blue] table {data/xDataRCBFSim1.dat};
\addlegendentry{$x(t)$ (robust CBF)};

\addplot [dashed, color = orange!50!black] table {data/xHatDataRCBFSim1.dat};
\addlegendentry{$\hat{x}(t)$ (robust CBF)};

\addplot [color = violet] table {data/xDataCBFSim1.dat};
\addlegendentry{$x(t)$ (CBF)};

\addplot [dashed, color = gray] table {data/xHatDataCBFSim1.dat};
\addlegendentry{$\hat{x}(t)$ (CBF)};

\addplot [color = red] table {data/xDataNoCBFSim1.dat};
\addlegendentry{$x(t)$ (no CBF)};

\addplot [color = cyan, dashed] table {data/xHatDataNoCBFSim1.dat};
\addlegendentry{$\hat{x}(t)$ (no CBF)};

\end{axis}

\filldraw[fill=green] (3.67,2.3) circle[radius=2pt];
\draw[orange] (1.2,3.02) circle[radius=2pt];
\draw[orange] (0.6,3.39) circle[radius=2pt];

\end{tikzpicture}
	\caption{Results for Section~\ref{sim:convexSet}, safety within a given set (marked by a thick green boundary). The trajectories of the actual states $x$ and the estimated state $\hat{x}$ under the safe controller in \eqref{eq:qp} (robust CBF) are compared against controllers with standard CBF and without CBF.}
		\label{fig:safetySim1}
\end{figure}
\begin{figure}
        \centering
        \begin{tikzpicture}
    \begin{axis}[
        xlabel={$t$ [s]},
        ylabel={$\tilde{x}(t)$},
        legend pos = north east,
        legend style={nodes={scale=0.8, transform shape}},
        enlarge y limits=0.05,
        enlarge x limits=0,
        width=\linewidth,
        height=0.45\linewidth,
    ]
    \pgfplotsinvokeforeach{1,...,2}{
        \addplot+ [mark=none] table [x index=0, y index=#1] {data/eDataSim1.dat};
    }
    \legend{$\tilde{x}_{1}(t)$, $\tilde{x}_{2}(t)$}
    \end{axis}
\end{tikzpicture}
	
		\caption{Estimation errors between the actual states and the estimated states for the experiment in Section~\ref{sim:convexSet}.}
		\label{fig:errorSim1}
\end{figure}

For the DNN-based state observer in \eqref{eq:DNNDynamicsEstimate}, the matrix $A = [-0.6, -1; 0, 0]$ and the linear correction gains $K = [10.4, -30]^{\top}$ are selected via pole placement with poles at $[-5, -6]$ so that $A-KC$ is Hurwitz. A DNN architecture with three inner layers, each comprising of $10$, $6$, and $7$ neurons, respectively, is employed for state estimation using data collected from a simulated two-state system with varying initial conditions and using a random control policy, generating input-output pairs $\left\{\hat{x}(t_{j}), \dot{\hat{x}}(t_{j})\right\}$ for all $t_{j} \in \mathbb{R}_{\geq 0}$ and $j \in \mathbb{Z}_{\geq 0}$. The activation functions of the DNN are Elliot symmetric sigmoid, logarithmic sigmoid, and tangent sigmoid for the respective layers. The Levenberg-Marquardt algorithm in the \texttt{Deep Learning Toolbox} on MATLAB trains the network on 70\% training, 15\% validation, and 15\% testing data to minimize the MSE to $5 \times 10^{-3}$ within 10,000 epochs. The simulation parameters and gains are selected as $k_{\theta} = 100$, $\gamma = \mathcal{I}_{13 \times 13}$, $\kappa = 0.5$, $\Delta t = 0.25$, and the history stack size is capped at $M = 20$. At the start of the simulation, the history stacks and the initial outer layer weights are initialized to zero. These weights are subsequently updated according to the law in \eqref{eq:outerLayerWeightsUpdate}, and the inner DNN weights are trained every $2$ seconds according to Algorithm~\ref{algo:parameterEstimatorAlgo}.
\begin{figure}
        \centering
       \begin{tikzpicture}
    \begin{axis}[
        xlabel={$t$ [s]},
        ylabel={$\hat{W}_{c}(t)$},
        legend pos = north east,
        legend style={nodes={scale=0.75, transform shape}},
        enlarge y limits=0.01,
        enlarge x limits=0,
        width=\linewidth,
        height=0.45\linewidth,
    ]
    \pgfplotsinvokeforeach{1,...,3}{
        \addplot+ [solid, color=color#1, mark=none] table [x index=0, y index=#1] {data/WcHatDataSim1.dat};
    }
    \legend{$\hat{W}_{c_{1}}(t)$, $\hat{W}_{c_{2}}(t)$, $\hat{W}_{c_{3}}(t)$}
    \end{axis}
\end{tikzpicture}
		\caption{Estimated critic weights for the experiment in Section~\ref{sim:convexSet}.}
		\label{fig:valueWeightsSim1}
\end{figure}
\begin{figure}
        \centering
       \begin{tikzpicture}
    \begin{axis}[
        xlabel={$t$ [s]},
        ylabel={$\hat{W}_{a}(t)$},
        legend pos = north east,
        legend style={nodes={scale=0.75, transform shape}},
        enlarge y limits=0.01,
        enlarge x limits=0,
        width=\linewidth,
        height=0.45\linewidth,
    ]
    \pgfplotsinvokeforeach{1,...,3}{
        \addplot+ [solid, color=color#1, mark=none] table [x index=0, y index=#1] {data/WaHatDataSim1.dat};
    }
    \legend{$\hat{W}_{a_{1}}(t)$, $\hat{W}_{a_{2}}(t)$, $\hat{W}_{a_{3}}(t)$}
    \end{axis}
\end{tikzpicture}
		\caption{Estimated actor weights for the experiment in Section~\ref{sim:convexSet}.}
		\label{fig:policyWeightsSim1}
\end{figure}
\begin{figure}
        \centering
       \begin{tikzpicture}
    \begin{axis}[
        xlabel={$t$ [s]},
        ylabel={$\hat{\theta}(t)$},
        legend pos = north east,
        legend style={nodes={scale=0.75, transform shape}},
        enlarge y limits=0.01,
        enlarge x limits=0,
        width=0.975\linewidth,
        height=0.45\linewidth,
        cycle list name=color list,
    ]
    \pgfplotsinvokeforeach{1,...,13}{
        \addplot+ [solid, mark=none] table [x index=0, y index=#1] {data/thetaHatDataSim1.dat};
    }
    \addlegendentry{$\hat{\theta}_i(t),\ i=1,\dots,13$}
    \end{axis}
\end{tikzpicture}
		\caption{Estimated outer layer DNN weights for the experiment in Section~\ref{sim:convexSet}.}
		\label{fig:parameterWeightsSim1}
\end{figure}

\subsection{Obstacle Avoidance}\label{sim:collision}
To demonstrate how the developed controller can be used to avoid an obstacle, a simulation study is performed on a control-affine nonlinear system obtained from \cite{SCC.Ren.Zhang.ea2023} which is of the form in \eqref{eq:dynamics} with states $x$ = $[x_{1}, x_{2}]^\top$, where
 \begin{gather}\label{eq:simDyn2}
f(x) =\begin{bmatrix}
-x_{1} - x_{2} \\
 -\frac{1}{2}x_{1}-\frac{1}{2}x_{2}\big(1-x_{1}^{2}\big)-x_{1}^{2}x_{2} \nonumber
\end{bmatrix}, \\
g(x) =  \begin{bmatrix}0 \\
\cos\left(2x_{1}\right)+2
\end{bmatrix},\text{ and }C = \begin{bmatrix}
    1 & 0
\end{bmatrix}. \end{gather}
 The control objective is to minimize the infinite horizon cost in \eqref{eq:costFunctional} while avoiding a circular obstacle. The safe set $\mathcal{S}$ described in \eqref{eq:safeSet1}--\eqref{eq:safeSet3} is defined by   
    $h(x) = \|x - z\| - r$, 
 where $z = [-0.7, 1.2]^{\top}$ is the center and $r = 0.35$ is the radius. The actual states, the estimated states, the actor weights, the critic weights, and the least squares gain matrix are initialized as $x(0) = [-0.5, 2]^{\top}$, $\hat{x}(0) = [-0.75, 2.25]^{\top}$, $\hat{W_{a}}(0) = 0.5\mathbf{1}_{3 \times 1}$, $\hat{W_{c}}(0) = 0.5\mathbf{1}_{3 \times 1}$, and $\Gamma (0) = \mathcal{I}_{3\times3}$, respectively. For the DNN-based state observer in \eqref{eq:DNNDynamicsEstimate}, the matrix $A = [-1, -1; -0.5, -0.5]$ and the linear correction gain $K = [-13, 5]^{\top}$ is selected via pole placement with poles at $[-3, -4]$ so that $A-KC$ is Hurwitz. The learning gains are selected as $k_{a_{1}} = 1$, $k_{a_{2}} = 0.5$, and $k_{c} = 0.5$. The functions $F(\hat{x})$, $G_{i}^{-}(\hat{x})$, and $G_{i}^{+}(\hat{x})$ are determined using Lipschitz constants on $\bar{B}(\hat{x}, \varepsilon)$ with $\varepsilon = 0.5$, all of which are assumed to be $0.1$. The initial outer layer DNN weights are also selected as $\theta(0) = 0_{13 \times 13}$ and updated according to the law in \eqref{eq:outerLayerWeightsUpdate}, and the inner DNN weights are trained every $2$ seconds according to Algorithm~\ref{algo:parameterEstimatorAlgo}. The remaining simulation parameters and gains are selected similarly to the first study in Section~\ref{sim:convexSet}.
\begin{figure}
        \centering
        \begin{tikzpicture}
    \begin{axis}[
        axis equal,
        xlabel={$x_{1}(t)$},
        ylabel={$x_{2}(t)$},
        legend pos=north east,
        legend style={nodes={scale=0.6, transform shape}},
        width=\linewidth,
        height=0.7\linewidth,
        legend cell align={center},
    ]

        \def\zOne{-0.7} 
        \def\zTwo{1.2}   
        \def\radius{0.35} 
        
        \draw[thin, red, fill=red!40, opacity=0.5] 
            (\zOne, \zTwo) circle (\radius);
        \fill[red] (\zOne, \zTwo) circle (1.5pt);
        
        \draw[red!80] (\zOne+0.1, \zTwo) -- (\zOne+0.5, \zTwo+0.05);
        \node[below right, red!80, font=\footnotesize] at (\zOne+0.5, \zTwo+0.175) {Obstacle};
        
        \addplot [color = blue] table {data/xDataRCBFSim2.dat};
        \addlegendentry{$x(t)$ (robust CBF)};
        
        \addplot [dashed, color = orange!50!black] table {data/xHatDataRCBFSim2.dat};
        \addlegendentry{$\hat{x}(t)$ (robust CBF)};
        
        \addplot [color = violet] table {data/xDataCBFSim2.dat};
        \addlegendentry{$x(t)$ (CBF)};
        
        \addplot [dashed, color = gray] table {data/xHatDataCBFSim2.dat};
        \addlegendentry{$\hat{x}(t)$ (CBF)};

        \addplot [color = red] table {data/xDataNoCBFSim2.dat};
        \addlegendentry{$x(t)$ (no CBF)};
        
        \addplot [dashed, color = cyan] table {data/xHatDataNoCBFSim2.dat};
        \addlegendentry{$\hat{x}(t)$ (no CBF)};

    \end{axis}

    \filldraw[fill=green] (4.58, 0.4) circle[radius=1.5pt];
    
    \draw[orange] (3.73, 3.8) circle[radius=1.5pt];
    \draw[orange] (3.305, 4.23) circle[radius=1.5pt];
\end{tikzpicture}
	\caption{Results for Section~\ref{sim:collision}, obstacle avoidance. The trajectories of the actual states $x$ and the estimated state $\hat{x}$ under the safe controller in \eqref{eq:qp} (robust CBF) are compared against controllers with standard CBF and without CBF.}
		\label{fig:barrierSim2}
\end{figure}
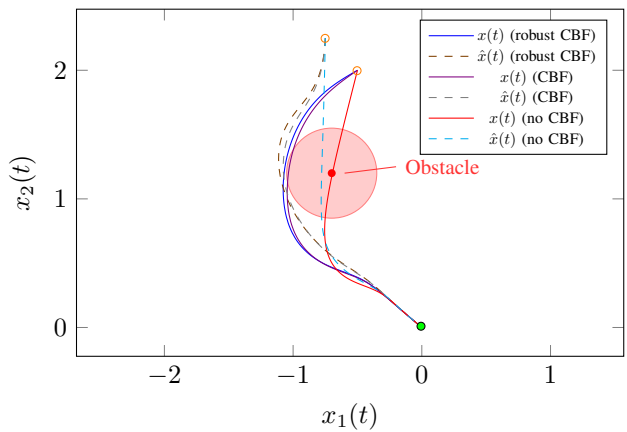
\begin{figure}
        \centering
        \begin{tikzpicture}
    \begin{axis}[
        xlabel={$t$ [s]},
        ylabel={$\tilde{x}(t)$},
        legend pos = north east,
        legend style={nodes={scale=0.75, transform shape}},
        enlarge y limits=0.01,
        enlarge x limits=0,
        width=\linewidth,
        height=0.45\linewidth,
    ]
    \pgfplotsinvokeforeach{1,...,2}{
        \addplot+ [mark=none] table [x index=0, y index=#1] {data/eDataSim2.dat};
    }
    \legend{$\tilde{x}_{1}(t)$, $\tilde{x}_{2}(t)$}
    \end{axis}
\end{tikzpicture}
	
		\caption{Estimation errors between the actual states and the estimated states for the experiment in Section~\ref{sim:collision}.}
		\label{fig:errorSim2}
\end{figure}

\subsection{Discussion of results}
It can be observed from Fig.~\ref{fig:safetySim1} that the developed controller in \eqref{eq:qp} guarantees safety despite partial state measurement in the simulation study in \ref{sim:convexSet}. As shown in Fig.~\ref{fig:safetySim1}, without the robust CBF, the trajectory of the system crosses the green boundary of the safe set. 
Similarly, for the obstacle avoidance problem in Section~\ref{sim:collision}, without the robust CBF, the system fails to avoid the obstacle. Conversely, the safe controller in \eqref{eq:qp} guarantees that system trajectories steer clear of the obstacle while converging to the origin.
 From Figs.~\ref{fig:errorSim1} and \ref{fig:errorSim2}, it can be seen that the state estimation errors converge to a neighborhood of the origin, which showcases the effectiveness of the designed DNN-based observer. The actor and critic weight estimates, as depicted in Figs.~\ref{fig:valueWeightsSim1}, \ref{fig:policyWeightsSim1}, \ref{fig:policyWeightsSim2}, and \ref{fig:valueWeightsSim2}, are shown to be bounded, which correlates with the results of Theorem \ref{thm:boundedAndSafe}. 
 
 For DNN-based state estimation, data are initially gathered, and weights and biases are trained. After approximately $2$ seconds, sufficient data are available for retraining the DNN. In this phase, weights are updated, and the history stack $\mathcal{H}$ is purged and repopulated according to the minimum eigenvalue maximization algorithm in Algorithm~\ref{algo:parameterEstimatorAlgo}. The first half of the simulation is dedicated to data collection and training. Following retraining, improvements in state estimates are observed and the convergent outer layer weights are shown in Figs.~\ref{fig:parameterWeightsSim1} and \ref{fig:parameterWeightsSim2}.
 \begin{figure}
        \centering
    \begin{tikzpicture}
    \begin{axis}[
        xlabel={$t$ [s]},
        ylabel={$\hat{W}_{c}(t)$},
        legend pos = north east,
        legend style={nodes={scale=0.75, transform shape}},
        enlarge y limits=0.05,
        enlarge x limits=0,
        width=\linewidth,
        height=0.45\linewidth,
    ]
    \pgfplotsinvokeforeach{1,...,3}{
        \addplot+ [solid, color=color#1, mark=none] table [x index=0, y index=#1] {data/WcHatDataSim2.dat};
    }
    \legend{$\hat{W}_{c_{1}}(t)$, $\hat{W}_{c_{2}}(t)$, $\hat{W}_{c_{3}}(t)$}
    \end{axis}
\end{tikzpicture}
		\caption{Estimated critic weights for the experiment in Section~\ref{sim:collision}.}
		\label{fig:valueWeightsSim2}
\end{figure}
\begin{figure}
        \centering
    \begin{tikzpicture}
    \begin{axis}[
        xlabel={$t$ [s]},
        ylabel={$\hat{W}_{a}(t)$},
        legend pos = north east,
        legend style={nodes={scale=0.75, transform shape}},
        enlarge y limits=0.05,
        enlarge x limits=0,
        width=\linewidth,
        height=0.45\linewidth,
    ]
    \pgfplotsinvokeforeach{1,...,3}{
        \addplot+ [solid, color=color#1, mark=none] table [x index=0, y index=#1] {data/WaHatDataSim2.dat};
    }
    \legend{$\hat{W}_{a_{1}}(t)$, $\hat{W}_{a_{2}}(t)$, $\hat{W}_{a_{3}}(t)$}
    \end{axis}
\end{tikzpicture}
		\caption{Estimated actor weights for the experiment in Section~\ref{sim:collision}.}
		\label{fig:policyWeightsSim2}
\end{figure}
\begin{figure}
        \centering
       \begin{tikzpicture}
    \begin{axis}[
        xlabel={$t$ [s]},
        ylabel={$\hat{\theta}(t)$},
        legend pos = north east,
        legend style={nodes={scale=0.75, transform shape}},
        enlarge y limits=0.01,
        enlarge x limits=0,
        width=0.975\linewidth,
        height=0.45\linewidth,
        cycle list name=color list,
        scaled y ticks=false,  
        yticklabel style={/pgf/number format/fixed, /pgf/number format/precision=2},
    ]
    \pgfplotsinvokeforeach{1,...,13}{
        \addplot+ [solid, mark=none] table [x index=0, y index=#1] {data/thetaHatDataSim2.dat};
    }
    \addlegendentry{$\hat{\theta}_i(t),\ i=1,\dots,13$}
    \end{axis}
\end{tikzpicture}
		\caption{Estimated outer layer DNN weights for the experiment in Section~\ref{sim:collision}.}
		\label{fig:parameterWeightsSim2}
\end{figure}
\section{Conclusion}\label{section:conclusion}
This paper tackles safety in output feedback systems by proposing an SAOC framework that utilizes state and parameter estimation, along with CBFs, to enable safe adaptive control. The objective is to realize online, real-time safe learning that guarantees convergence and safety despite uncertainty in the state. The developed approach separates the infinite-horizon optimal control problem from the CBF-based safety problem. The safe controller is then integrated with the SAOC-based stabilizing control policy. This integration results in a controller that is both approximately optimal and robust against state estimation errors, thereby ensuring effective real-time adaptation and learning while achieving safety objectives.

 The integration of the DNN framework within a minimum eigenvalue maximization algorithm, featuring real-time adaptation of outermost layer weights and batch updates for inner layers, provides an effective method of state estimation as evidenced by the simulation results in Section~\ref{section:simulation}. While the simulation results demonstrate the effectiveness of the technique developed in this paper, adherence to safety constraints relies on the choice of the $\varepsilon$-bound on the state estimation error, which as derived from the analysis in Section~\ref{section:stabilityAnalysis}, requires knowledge of bounds on the true weights of the optimal value function in \eqref{eq:optimalV}. Since bounds on the optimal value function are unknown, we must rely on heuristic estimates. In the simulation examples, $\varepsilon$ is determined based on such estimates, which, lacking precise guarantees, can lead to overly conservative or even incorrect choices of $\varepsilon$. Assuming that the optimal value function lies in a reproducing kernel Hilbert space (RKHS), we can derive explicit bounds on the critic's performance in terms of kernel functions, the number of basis functions, and the scattered location of centers used to define the RKHS, using the approach in \cite{SCC.Niu.Bouland.ea2024}. The bounds on the performance of the critic will allow for a better selection of $\varepsilon$. Furthermore, the safe control policy in \eqref{eq:qp} relies on Lipschitz bounds of the system dynamics in \eqref{eq:dynamics}. Since these bounds are local, estimating them accurately over the entire operating region is challenging and may lead to conservative behavior or safety violations. Thus, the effectiveness of the approach in this paper depends on how well these bounds capture the true dynamics.

Future research will focus on deriving explicit bounds on the value function to obtain more accurate state estimation error bounds. Furthermore, the developed approach will be extended to multiagent output feedback RL where each agent solves an output-feedback adaptive optimal control problem to realize a control framework that is motivated by the ``centralized learning, decentralized execution'' paradigm of multiagent RL \cite{SCC.Lowe.Wu.ea2017}.

\small
\bibliographystyle{IEEETrans.bst}
\bibliography{scc,sccmaster,scctemp}
\end{document}